\theoremstyle{plain}
\newtheorem{theorem}{Theorem}
\newtheorem{proposition}[theorem]{Proposition}
\newtheorem{lemma}[theorem]{Lemma}
\theoremstyle{definition}
\newtheorem{definition}{Definition}
\newtheorem{remark}{Remark}
\newcommand{\eqn}[1]{\hyperref[eqn:#1]{(\ref*{eqn:#1})}}
\newcommand{\rem}[1]{\hyperref[rem:#1]{Remark~\ref*{rem:#1}}}
\newcommand{\thm}[1]{\hyperref[thm:#1]{Theorem~\ref*{thm:#1}}}
\newcommand{\cor}[1]{\hyperref[cor:#1]{Corollary~\ref*{cor:#1}}}
\newcommand{\defn}[1]{\hyperref[defn:#1]{Definition~\ref*{defn:#1}}}
\newcommand{\assump}[1]{\hyperref[assump:#1]{Assumption~\ref*{assump:#1}}}
\newcommand{\lem}[1]{\hyperref[lem:#1]{Lemma~\ref*{lem:#1}}}
\newcommand{\prop}[1]{\hyperref[prop:#1]{Proposition~\ref*{prop:#1}}}
\newcommand{\fig}[1]{\hyperref[fig:#1]{Figure~\ref*{fig:#1}}}
\newcommand{\tab}[1]{\hyperref[tab:#1]{Table~\ref*{tab:#1}}}
\newcommand{\algo}[1]{\hyperref[algo:#1]{Algorithm~\ref*{algo:#1}}}
\renewcommand{\sec}[1]{\hyperref[sec:#1]{Section~\ref*{sec:#1}}}
\newcommand{\append}[1]{\hyperref[append:#1]{Appendix~\ref*{append:#1}}}
\newcommand{\fac}[1]{\hyperref[fac:#1]{Fact~\ref*{fac:#1}}}
\newcommand{\lin}[1]{\hyperref[lin:#1]{Line~\ref*{lin:#1}}}
\newcommand{\fnote}[1]{\hyperref[fnote:#1]{Footnote~\ref*{fnote:#1}}}
\def\>{\rangle}
\def\<{\langle}
\newcommand{\vect}[1]{\ensuremath{\boldsymbol{#1}}}
\newcommand{\N}{\mathbb{N}}
\newcommand{\R}{\mathbb{R}}
\newcommand{\C}{\mathbb{C}}
\renewcommand{\H}{\mathcal{H}}
\renewcommand{\d}{\mathrm{d}}
\newcommand{\xx}{\vect{x}}
\newcommand{\D}{\mathcal{D}}
\newcommand{\bigO}{\mathcal{O}}
\newcommand{\tbigO}{\widetilde{\mathcal{O}}}
\newcommand{\ttspar}[1]{T_{#1}}
\DeclareMathOperator{\poly}{poly}
\renewcommand\bra[1]{{\langle{#1}|}}
\renewcommand\ket[1]{%
  \@ifnextchar\bra{\k@t{#1}\!}{\k@t{#1}}%
}
\newcommand\k@t[1]{{|{#1}\rangle}}
\numberwithin{equation}{section}
\begin{document}

\title{A Quantum-Classical Performance Separation in Nonconvex Optimization}
\author[1,3] {Jiaqi Leng}
\author[2,3] {Yufan Zheng}
\author[2,3,$\dagger$] {Xiaodi Wu}
\affil[1]{Department of Mathematics, University of Maryland}
\affil[2]{Department of Computer Science, University of Maryland}
\affil[3]{Joint Center for Quantum Information and Computer Science, University of Maryland}
\affil[$\dagger$]{\href{mailto:xiaodiwu@umd.edu}{xiaodiwu@umd.edu}}
\date{}

\maketitle

\begin{abstract}
    In this paper, we identify a family of nonconvex continuous optimization instances, each $d$-dimensional instance with $2^d$ local minima, to demonstrate a quantum-classical performance separation. 
    Specifically, we prove that the recently proposed Quantum Hamiltonian Descent (QHD) algorithm [Leng et al., arXiv:2303.01471] is able to solve any $d$-dimensional instance from this family using $\tbigO(d^3)$\footnote{$\tbigO(\cdot)$ suppresses poly-logarithmic factors in $d$ and $1/\delta$ where $\delta$ is the precision.} quantum queries to the function value and $\tbigO(d^4)$ additional 1-qubit and 2-qubit elementary quantum gates. 
    On the other side, a comprehensive empirical study suggests that representative state-of-the-art classical optimization algorithms/solvers (including Gurobi) would require a super-polynomial time to solve such optimization instances.
\end{abstract}

\section{Introduction}
Nonconvex optimization is a central object in machine learning and operations research. In practice, iterative optimization algorithms using local gradient information (e.g., stochastic gradient descent) have been proven successful for large-scale nonconvex optimization, especially when the optimization landscape is well-conditioned. However, lots of nonconvex problems naturally arising from application domains possess sophisticated optimization landscape with a huge number of saddle points and spurious local minima, posing great challenges for gradient-based algorithms. Heuristic optimization algorithms (e.g., simulated annealing, particle swarm, etc.), on the other hand, could explore the nonconvex landscape more efficiently, while few theoretical guarantees are established and the performance of heuristic algorithms varies drastically between instances.

Quantum computers are revolutionary computing machines that leverage the law of quantum mechanics to deliver faster and more secure resolutions to involved computational tasks. Since quantum computers directly operate quantum states that represent the superposition of all possible solutions, they can potentially speed up the exploration of the optimization landscape and therefore demonstrate real advantage for nonconvex optimization problems. 

Recently, Leng, Hickman, Li, and Wu~\cite{leng2023quantum} propose a quantum algorithm named \textbf{Quantum Hamiltonian Descent (QHD)} for continuous optimization.
QHD can be regarded as a quantum-upgraded version of classical gradient descent because of its simplicity and resource efficiency; on the other hand, QHD goes beyond local search as the quantum tunneling effect enables QHD to go through barriers on the optimization landscape. Although numerical and real-machine experiments suggest that QHD has potent performance for continuous optimization, a fine-grained theoretical understanding of QHD's speedup in the nonconvex setting remains an open question. 

In this work, we further investigate of behavior of QHD on nonconvex continuous optimization problems. We construct a specific family of optimization instances such that: (1) the number of local minima on these problems scales exponentially with the problem dimension; (2) no explicit sparsity or separability pattern can be employed to enable an efficient classical solution.
Due to these features, the constructed problem instances are intractable for most classical optimization algorithms. Empirically, we find that several state-of-the-art classical optimization algorithms and solvers would require a \emph{super-polynomial} time to solve these hard instances (see \sec{empirical}). Meanwhile, we prove that QHD solves these hard optimization problems in \emph{polynomial} time. Our findings constitute new evidence of a significant performance separation between quantum and classical algorithms in nonconvex optimization.

\subsection{Problem formulation}
We construct a family of optimization instances representing nonconvex optimization problems with sophisticated landscapes. These hard problem instances are constructed as follows.
First, we consider certain well-formed one-dimensional ``double well'' functions with two local minima (see \fig{instances}A). These well-formed double well functions can be expressed as degree-4 polynomials (see \sec{instances} for more details). Suppose that $w(x)$ is a well-formed 1D double well function with global minimizer at $x = x^*$, we then define a $d$-dimensional separable function $F(\xx) \coloneqq \sum^d_{k=1} w(x_k)$. We note that the function $F(\xx)$ has $2^d$ local minima and a unique global minimizer at $\xx^* = (x^*,\dots,x^*)$. In \fig{instances}B, we show a two-dimensional separable function with 4 local minima.

\begin{figure}[htbp]
    \centering
    \includegraphics[width=12cm]{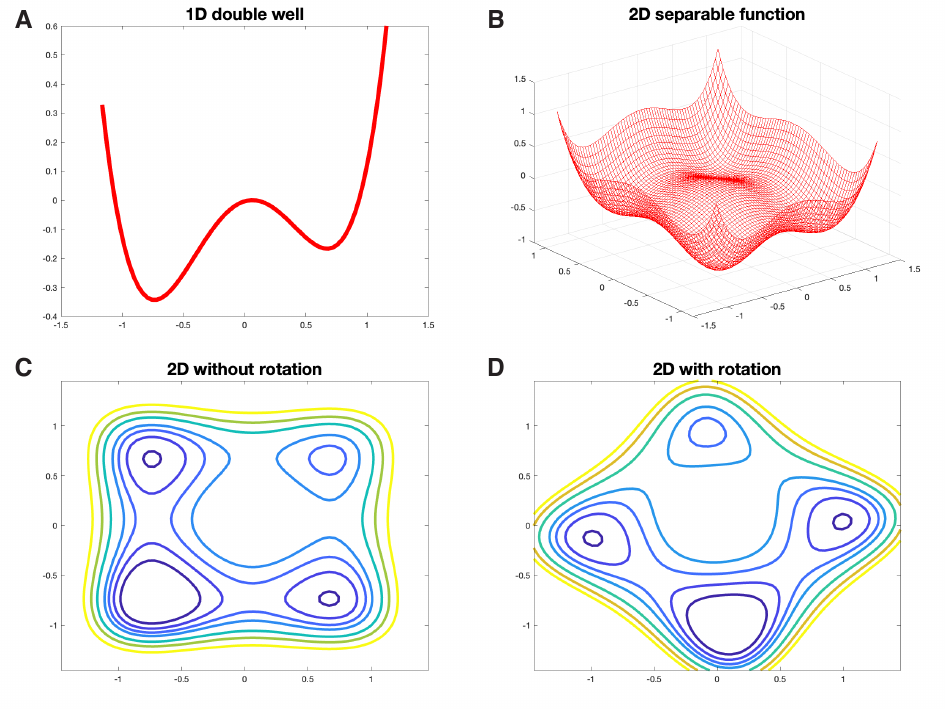}
    \caption{\small \textbf{Construction of optimization instances.} 
    \textbf{A.} A well-formed double well function.
    \textbf{B.} A separable function built from two identical well-formed double well functions. 
    \textbf{C.} Contour plot of the 2-dimensional separable function $F(\xx)$.
    \textbf{D.} Contour plot of the rotated 2-dimensional function $F_U(\xx)$. The local geometry of $F(\xx)$ is preserved, while the new function $F_U(\xx)$ is not separable.}
    \label{fig:instances}
\end{figure}

Although the function $F(\xx)$ has exponentially many local minima, it is separable in each independent variable $x_k$. The separability structure of $F(\xx)$ can be easily detected from its closed-form expression. Even if the closed-form expression is not available and the function is given via a black-box query model, a simple classical algorithm such as coordinate descent can split the problem into $d$ independent sub-problems, each can be solved in constant time.

To create hard optimization instances for classical algorithms, we apply a random rotation to \textit{hide} the separability structure in $F(\xx)$. More precisely, let $U$ be an orthogonal matrix and we define a new function,
\begin{align}\label{eqn:FU}
    F_U(\xx) \coloneqq F(U\xx).
\end{align}
The newly generated function $F_U(\xx)$ still has $2^d$ local minima as the rotation preserves the geometry of the optimization landscape. However, unlike the original function $F(\xx)$, the rotated function $F_U(\xx)$ is no longer separable (see \fig{instances}C, D).

More importantly, the random rotations on nonconvex functions are difficult to revert. For example, if the double-well function $w(x)$ is a degree-4 polynomial, so is the rotated function $F_U(\xx)$ because the rotation by $U$ is an affine transformation. However, to learn the rotation $U$ from the closed-form expression of $F_U(\xx)$ amounts to diagonalizing a 4-tensor. Diagonalizing general 4-tensors is known to be \texttt{NP}-hard~\cite{hillar2013most}. Therefore, it does not appear to be a trivial task for classical algorithms to revert the rotation and decompose $F_U(\xx)$ into $d$ separate sub-problems.

\begin{definition}[Nonconvex optimization instances]\label{defn:instances}
    Given an appropriate choice of the 1D double-well function $w(x)$ (more details available in \sec{instances}), we define a class of optimization instances $\mathscr{C}(w, d)$ for dimension $d \ge 1$:
\begin{align}
    \mathscr{C}(w, d) \coloneqq \{F_U(\vect{x}): U \in \mathcal{Q}(d)\},
\end{align}
where $F_U(\vect{x})$ is defined in \eqn{FU} and $\mathcal{Q}(d)$ is the set of all $d$-by-$d$ orthogonal matrix:
$$\mathcal{Q}(d) \coloneqq \{U \in \R^{d\times d}: U U^\top = I\}.$$
\end{definition}

\subsection{Main theoretical result}
Due to our construction of the optimization instances, any instance $f \in \mathscr{C}(w,d)$ has $2^d$ local minima but a unique global minimum $f(\xx^*)$. We say $\xx \in \R^d$ is an $\delta$-\textit{approximate} solution to the optimization problem $f$ if $\|\xx - \xx^*\| \le \delta$. Our main theoretical contribution is to prove that QHD can find a $\delta$-approximate solution to any $f \in \mathscr{C}(w,d)$ in polynomial time.

We assume our quantum algorithm only has access to the \textit{quantum evaluation oracle} (i.e., zeroth-order oracle), which is defined as a unitary map $O_f$ on $\R^d\otimes \R$ such that for any $\ket{\xx} \in \R^d$,
\begin{align}
    O_f\left(\ket{\xx}\otimes\ket{0}\right) = \ket{\xx}\otimes\ket{f(\xx)}
\end{align}
Note that the quantum evaluation oracle can be coherently accessed. Namely, for any $m \in \N$, let $\ket{\xx_1}, \dots,\ket{\xx_d} \in \R^d$ and $\mathbf{c} \in \C^m$ such that $\|\mathbf{c}\| = 1$, we have
\begin{align}
    O_f\left(\sum^m_{j=1}c_j\ket{\xx_j}\otimes\ket{0}\right) = \sum^m_{j=1}c_j\ket{\xx_j}\otimes\ket{f(\xx_j)}.
\end{align}

Our main theoretical result is summarized in the following theorem, where the $\tbigO$ notation suppresses poly-logarithmic factors in $d$ and $1/\delta$.

\begin{theorem}[Informal version of \thm{main}] \label{thm:informal_main}
    Let $f\colon \R^d\to \R$ be an instance in the class $\mathscr{C}(w,d)$. For any $\delta > 0$, Quantum Hamiltonian Descent can produce an $\delta$-approximate solution to $f$ with probability at least $2/3$ using $\tbigO\left(\frac{d^{3}}{\delta^2}\right)$ quantum queries to $f$ and additional $\tbigO\left(\frac{d^{4}}{\delta^2}\right)$ 1- and 2-qubit elementary gates.
\end{theorem}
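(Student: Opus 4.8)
The plan rests on a single structural observation: the quantum dynamics defining QHD is \emph{invariant under orthogonal changes of coordinates}. QHD evolves a state under a Schrödinger operator of the form $H(t) = -\tfrac12 e^{\varphi_t}\nabla^2 + e^{\chi_t} f(\xx)$, in which the objective $f$ enters only as a multiplication (potential) operator while the kinetic term is the Laplacian $\nabla^2$. Since $\nabla^2$ commutes with the unitary $V_U$ implementing $\xx\mapsto U\xx$ (orthogonal maps preserve the Laplacian) and $f=F_U$ becomes $F$ under this substitution, one has $V_U^{-1}H[F_U]V_U = H[F]$, so the QHD evolution for any $f\in\mathscr{C}(w,d)$ is unitarily conjugate to the QHD evolution for the \emph{separable} function $F(\xx)=\sum_k w(x_k)$. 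Provided the initial state is rotation-invariant (as in the standard QHD initialization), the two runs differ only by $V_U$, and because $|\,(V_U\psi)(\xx)|^2=|\psi(U\xx)|^2$ with $U$ orthogonal (hence distance-preserving, $\|\xx-\xx^*\|=\|U\xx-U\xx^*\|$, Jacobian $1$), the final position-measurement statistics, including the probability of landing within $\delta$ of the minimizer, are \emph{identical}. First I would make this reduction rigorous: the rotation $U$, which is exactly what defeats coordinate- and gradient-based classical methods, is invisible to QHD, so it suffices to analyze $F$.

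\textbf{Second}, I would exploit separability. For $F(\xx)=\sum_k w(x_k)$ the Hamiltonian splits as $H(t)=\sum_{k=1}^d H_k(t)$ with $H_k(t) = -\tfrac12 e^{\varphi_t}\partial_{x_k}^2 + e^{\chi_t}w(x_k)$ acting on disjoint coordinates; the summands commute, so the propagator factorizes as $\bigotimes_k \mathcal{U}_k$. Starting from a product state $\bigotimes_k \phi_0(x_k)$, the evolved state stays a product state, each factor solving the \emph{same} one-dimensional QHD dynamics for the double well $w$. This reduces the $d$-dimensional problem to one 1D analysis run in parallel on every coordinate, and crucially the total evolution time equals that of a single 1D problem.

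\textbf{Third} — the main obstacle — I would establish quantitative convergence of the 1D dynamics: after evolution time $T_1=\poly(d,1/\delta)$, the final 1D position density must place probability at least $1-\Theta(1/d)$ within distance $\delta/\sqrt d$ of the global minimizer $x^*$. This is where the ``well-formed'' hypotheses on $w$ enter, guaranteeing that the instantaneous ground state of $H_k(t)$ localizes in the deeper (global) well and that the relevant spectral gap is bounded below by an explicit function of the schedule. I expect the cleanest route is an adiabatic-type estimate — bounding the deviation of the evolved state from the instantaneous ground state by (gap)$^{-2}$ times the rate of change of $H_k(t)$ — combined with a concentration estimate for that ground state. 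Controlling the quartic double-well spectrum (tunneling splitting and gap to excited states) uniformly over the parameter schedule is the delicate part, and driving the failure probability down to $O(1/d)$ at precision $\delta/\sqrt d$ is what pins down the polynomial dependence of $T_1$ on $d$ and $1/\delta$.

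\textbf{Fourth}, I would tensorize and count resources. A product state succeeds on all coordinates simultaneously with probability $\prod_k P_k$; enforcing $|y_k-x^*|\le\delta/\sqrt d$ on every $k$ yields $\sum_k(y_k-x^*)^2\le\delta^2$, an overall $\delta$-approximate solution, and keeping $\prod_k P_k\ge 2/3$ requires $P_k\ge(2/3)^{1/d}=1-\Theta(1/d)$ — precisely the per-coordinate guarantee of the previous step. This amplification is essential, since constant per-coordinate success would collapse to an exponentially small $p^d$. Finally I would convert the continuous evolution into a circuit: discretize each axis onto an $N$-point grid ($d\log N$ qubits) and simulate $H(t)$ by operator splitting, realizing the kinetic part through dimension-wise quantum Fourier transforms and the potential part through one query to $O_f$ per step (compute $f$ into an ancilla, apply the diagonal phase, uncompute). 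The query count is governed by the number of simulation steps, $\tbigO(T_1\cdot\text{polylog})$, while the gate count carries an extra factor of $d$ from the $d$ spatial QFTs per step — exactly the source of the gap between the $\tbigO(d^3/\delta^2)$ query bound and the $\tbigO(d^4/\delta^2)$ gate bound. Tracking the Trotter error, the grid-truncation error, and the adiabatic error within a single $\delta$-budget closes the argument.
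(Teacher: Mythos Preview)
Your plan is essentially the paper's proof: rotation-invariance of $\nabla^2$ reduces $F_U$ to the separable $F$, separability factorizes the $d$-dimensional evolution into $d$ identical copies of the 1D QHD for $w$, a 1D adiabatic bound plus concentration of the instantaneous ground state near $x^*$ controls the success probability, and a real-space simulation algorithm yields the query and gate counts (with the extra $d$ in the gate count coming, as you say, from the per-dimension transforms).

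Two points of divergence are worth noting. (i) The paper tensorizes at the level of \emph{states} rather than probabilities: it bounds $\|\Psi^\epsilon-\tilde{\Phi}\|\le \epsilon\theta\, d\, t_f$ by a telescoping triangle inequality over the $d$ product factors (\prop{d_dim_adiabatic}), and then applies a single sub-Gaussian tail bound (Hsu--Kakade--Zhang) to the full $d$-dimensional ground state to control $\Pr[\|X-\xx^*\|>\delta]$ directly. Your $\ell_\infty$-per-coordinate route, requiring $P_k\ge 1-\Theta(1/d)$ at precision $\delta/\sqrt d$ and multiplying, also works and produces the same parameter choices up to logarithms; the paper's route just avoids the product/union step. (ii) The ``delicate part'' you anticipate---uniform control of the double-well spectral gap along the entire schedule---is not proved in the paper but \emph{assumed}: a ``well-formed'' $w$ is by definition gapped, i.e.\ $\inf_{\lambda>0}\Delta(\lambda)>0$ (\defn{gappedness}, condition~4 of \defn{1}), and this is verified only numerically for the specific quartic used in the experiments. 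So you should treat it as a hypothesis on $w$, not a lemma to establish. Lastly, the paper invokes an interaction-picture simulator (\thm{spectral-method}) rather than plain Trotter splitting; naive Trotter would not obviously hit the stated $\tbigO(d^3/\delta^2)$ query bound without more care.
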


\subsection{Empirical study of classical optimization algorithms}
In our empirical study, we select 6 representatives
from a comprehensive selection of state-of-the-art optimization algorithms in the following four major categories:

\begin{enumerate}
    \item Annealing-based and Monte-Carlo algorithms: dual annealing~\cite{xiang1997generalized}, basin-hopping~\cite{wales1997global};
    \item Gradient method: stochastic (or perturbed) gradient descent;
    \item Newton and quasi-Newton methods: interior-point method (Ipopt~\cite{wachter2006implementation}), sequential quadratic programming~\cite{nocedal1999numerical};
    \item Branch-and-bound algorithm: Gurobi~\cite{gurobi}.
\end{enumerate}

We test each optimization algorithm/solver on thousands of random instances from $\mathscr{C}(w,d)$ for a fixed degree-4 polynomial $w$ and different values of $d$, and then measure the run time and success rate.
Next, we compute the average \textit{time-to-solution} (TTS) for each classical algorithm/solver as a function of the dimension $d$. All the empirical scaling results are computed using the same software and hardware environment to ensure comparability of the collected run time data.

Our data suggests that, for all tested classical algorithms/solvers, the measured TTS scales as a super-polynomial function in dimension $d$. 
On the contrary, our theoretical result implies that the TTS of QHD scales polynomially in dimension $d$. More discussions and the visualization of empirical data are available in \sec{empirical}.

\subsection{Discussions on the implication and limitation}

The original QHD paper~\cite{leng2023quantum} has demonstrated a large-scale empirical implementation of QHD on a class of boxed-constrained quadratic programming problems up to 75 dimensions by using D-Wave machines as a quantum Ising Hamiltonian simulator and embedding the QHD's dynamics into the evolution of quantum Ising Hamiltonian with the so-called Hamiltonian embedding technique. 
However, due to the limitation of the D-Wave machine (e.g., decoherence and the limited connectivity that restricts the sparsity of the realizable box-constrained quadratic programming instances), QHD's empirical performance fails to beat state-of-the-art branch-and-bound solvers (such as Gurobi).  
In this paper, our newly constructed instances are empirically hard for Gurobi, which requires super-polynomial time to solve. 
However, our old technique fails to extend to implement QHD on the newly constructed instances on any existing quantum hardware. 
Instead, we theoretically prove that such instances can be solved in polynomial time on an ideal quantum computer. 

Our empirical study covers a wide spectrum of classical state-of-the-art optimization algorithms, where the numerical results suggest the run time of all tested classical algorithms scales super-polynomially in dimension $d$. 
This empirical evidence implies that QHD could potentially replace the role of these classical solvers and become one competitive \emph{off-the-shelf} optimization solution in practice. 
However, it does not exclude the possibility of efficient classical algorithms specially designed for our tested problem instances. 
We expect any such efficient classical algorithm, if exists, would, nevertheless, require a special design to leverage the hidden separable structure in the instances\footnote{ 
Note that \thm{informal_main} can be established with more general instances than those used in the empirical test. Any specially designed classical algorithm should aim to work with the general condition and its hidden structure.}.  Moreover, our empirical study is by no means complete as it is bound by limited computing resources and time.  In light of that, we make all of our empirical study publicly available and invite the whole community to test it further. 

We note further that our construction in the empirical test is unlikely to yield any super-polynomial oracle separation directly. 
This is because our optimization instances are $d$-dimensional degree-4 polynomials whose precise form can be fully determined with a polynomial, precisely $O(d^4)$, number of queries of the function value. 
Thus, any bigger than $\Omega(d^4)$ classical lower bound needs to be established in the plain model, which is likely beyond the reach of known techniques. 
We want to emphasize that our instances, however, could be more inspiring about concrete features of optimization problems that can be efficiently solved by QHD, compared with oracle separations, if possible, obtained via some technical routes. 
For example, any reduction from instances efficiently solvable by Shor's algorithm to optimization ones will likely not reveal any new quantum capability, as it is essentially running Shor's algorithm with an optimization disguise.

Finally, we want to emphasize that our specific construction of the hard optimization instances, however, should not be treated as a limitation of the applicability of the QHD algorithm, but rather a limitation of our method in analyzing QHD's behavior on general optimization instances. 
Indeed, as empirically evaluated in~\cite{leng2023quantum}, QHD could perform well on optimization instances much beyond the specific construction used in this paper. 
Analyzing the QHD's performance on these instances would require investigating the spectrum of Schr\"{o}dinger's operator with general potential energy functions, which we believe is an interesting but challenging problem in pure math. 

\paragraph{Code Availability.} 
The source codes of the classical empirical study are available at \url{https://github.com/lwins-lights/QHD/tree/main/classical}.

\paragraph{Notation.}
We let $\H = L^2(\R^d)$ and $\D = C^\infty_0(\R^d)$. The Hilbert space $\H$ is equipped with the standard inner product $\langle u, v \rangle = \int_{\R^d} \Bar{u}v~\d x$, which induces the $L^2$-norm in $\H$: $\|u\|^2 = \langle u, u \rangle$.

\subsection*{Acknowledgment}
We thank Scott Aaronson, Fernando Brandao, Shouvanik Chakrabarti, Alexander Dalzell,  Lei Fan, Aram Harrow,  Tongyang Li, Jin-Peng Liu, and Daochen Wang for insightful feedbacks. We also thank Lei Fan for helpful discussions on the empirical study with Gurobi. 
This work was partially funded by the U.S. Department of Energy, Office of Science, Office of Advanced Scientific Computing Research, Accelerated Research in Quantum Computing under Award Number DE-SC0020273, the Air Force Office of Scientific Research under Grant No. FA95502110051, the U.S. National Science Foundation grant CCF-1816695 and CCF-1942837 (CAREER), and a Sloan research fellowship.

\section{Preliminaries}\label{sec:prelim}
\subsection{Convexity and spectral gap}
Our theoretical analysis in this work heavily relies on the spectrum of the Schr\"odinger operator,
\begin{align}\label{eqn:operator}
    \hat{H} = -\nabla^2 + f,
\end{align}
where $\nabla^2$ is the Laplacian operator in the real space $\R^d$, and $f\colon \R^d \to \R$ is a continuous objective function that we want to minimize. The physical meaning of $f$ is a ``potential field'' that confines the motion of a quantum particle.

When $f$ is non-negative and it diverges at infinity (i.e., $\lim_{\|x\|\to \infty} f(x) = \infty$), the standard theory of elliptic operators~\cite[Theorem~10.7]{hislop2012introduction} guarantees that the spectrum of $\hat{H}$ is discrete and all eigenvalues are positive. We can arrange the eigenvalues of $\hat{H}$ in ascending order: $E_0 < E_1 \le \dots$. The difference between the first two eigenvalues (i.e., $E_1 - E_0$) is often called the \textit{spectral gap}. It is well-known that the spectral gap of the Schr\"odinger operator~\eqn{operator} has a deep connection to the convexity of the potential field $f$. Since the eighties, several authors (van den Berg~\cite{van1983condensation}, Ashbaugh and Benguria~\cite{ashbaugh1989optimal}, and Yau~\cite{yau1986nonlinear}) have independently observed that the spectral gap of a Schr\"odinger operator with a convex potential field has a lower bound $3\pi^2/D^2$, where $D$ is the diameter of the domain on which the Schr\"odinger operator is defined. This observation, known as the Fundamental Gap Conjecture, has been proven by Andrews and Clutterbuck~\cite{andrews2011proof} in 2011 using an involved analysis of the heat equation.

We now discuss a simple example known as the quantum harmonic oscillator. A quantum harmonic oscillator is described by a Schr\"odinger operator with a quadratic potential field,
\begin{align}
    H = \hbar^2 \left(-\frac{1}{2}\nabla^2\right) + \left(\frac{1}{2}\omega^2 x^2\right),
\end{align}
where $\omega$ is a fixed frequency and $\hbar$ is the so-called Planck constant. The eigenvalues of the quantum harmonic oscillator are given by $E_n = (2n+1)\frac{\hbar}{2} \omega$. We refer the readers to \cite{griffiths2018introduction} for more discussions on quantum harmonic oscillators.

If we consider the rescaled operator,
\begin{align*}
    \frac{1}{\hbar}H = \hbar \left(-\frac{1}{2}\nabla^2\right) + \frac{1}{\hbar}\left(\frac{1}{2}\omega^2 x^2\right),
\end{align*}
we find the eigenvalues of $H/\hbar$ are $E'_n = (n+1/2)\omega$, which only depend on the curvature of the potential field regardless of the value of $\hbar$. The spectral gap of the rescaled quantum harmonic oscillator $H/\hbar$ seems to be an invariant quantity determined by the geometry of the quadratic potential field.

Inspired by the case of quantum harmonic oscillators, we define a one-parameter family of Schr\"odinger operators for a given objective function $f$,
\begin{align}\label{eqn:sch_ops}
    \hat{H}(\lambda) = \frac{1}{\lambda} \left(-\frac{1}{2}\nabla^2\right) + \lambda f,
\end{align}
where $\lambda > 0$ is an interpolating parameter and we assume $f$ is non-negative and diverges at infinity. For any positive $\lambda$, the spectrum of $\hat{H}(\lambda)$ is discrete and we arrange the eigenvalues of $\hat{H}(\lambda)$ in ascending order,
\begin{align*}
    E_0(\lambda) < E_1(\lambda) \le E_2(\lambda) \le \dots.
\end{align*}
The spectral gap of $\hat{H}(\lambda)$ is denoted by 
\begin{align}
    \Delta(\lambda) \coloneqq E_1(\lambda) - E_0(\lambda).
\end{align}

If $f$ is a convex quadratic function, $\hat{H}(\lambda)$ is precisely a quantum harmonic oscillator. According to our discussion above, all the eigenvalues of the operator $\hat{H}(\lambda)$ only depend on $f$, regardless of the value of $\lambda$. Nevertheless, when $f$ is nonconvex and non-quadratic, the eigenvalues of $\hat{H}(\lambda)$ change along with $\lambda$. The spectral gap $\Delta(\lambda)$ depends on the nonconvexity of $f$~\cite{yau2009gap,andrews2011proof} and it also changes with $\lambda$. In particular, when $f$ is a double-well function (described by a quartic polynomial), the spectral gap can be made arbitrarily small by tuning the coefficients in the quartic polynomial~\cite{harrell1980double}. 

Although the spectral gap $\Delta(\lambda)$ could change drastically with $\lambda$ when $f$ is nonconvex, we observe that $\min_{\lambda > 0}\Delta(\lambda)$ is usually not arbitrarily small. In \sec{gap}, we study the spectral gap of the Schr\"odinger operator $\hat{H}(\lambda)$ with a (nonconvex) double well potential function using numerical methods. It turns out that the spectral gap $\Delta(\lambda)$ of nonconvex $f$ is robust under interpolation and its minimal value (i.e., $\min_{\lambda>0}\Delta(\lambda)$) appears to be determined solely by the geometry of $f$. Based on this observation, we prove that QHD can solve 1D nonconvex problems with two local minima (see \lem{1_dim_adiabatic}). This result is later generalized to high dimensions in \prop{d_dim_adiabatic} and eventually leads to our main theoretical result (see \thm{main}).

\subsection{Ground states of the Schr\"odinger operator}
In the Schr\"odinger operator \eqn{operator}, there are two major components: the kinetic operator $-\nabla^2$ and the potential operator $f$. When the kinetic operator dominates, the low-energy subspace of the Schr\"odinger operator tends to behave like that of a free particle. In another case, if the potential $f$ plays a leading role, the ground state of the Schr\"odinger operator exhibits properties as like a Gaussian distribution. For example, if we consider the following 1D Schr\"odinger operator,
\begin{align}\label{eqn:interpolated_operator}
    \hat{H}(\lambda) = \frac{1}{\lambda}\left(-\frac{1}{2}\nabla^2 \right) + \lambda f,
\end{align}
where $f\colon \R \to \R$ is a smooth one-dimensional potential function with a single, non-degenerate zero at $x^* = 0$: $f(0) = 0$, $\nabla f(0) = 0$, and $f''(0) > 0$. The function $f$ admits a Taylor expansion near the global minimum: 
$$f(x) = \frac{1}{2}f''(x^*) (x-x^*)^2 + O(x^3).$$
 As $\lambda$ becomes large, the potential landscape observed by the quantum particle is essentially the quadratic part of the potential $f$ near $x^*$. Let $\gamma = \sqrt{f''(x^*)}$, we can define a comparison Schr\"odinger operator
 \begin{align}
    \hat{K}(\lambda) = \frac{1}{\lambda}\left(-\frac{1}{2}\nabla^2 \right) + \lambda \left(\frac{1}{2}\gamma^2 (x-x^*)^2 \right).
\end{align}
This new operator $\hat{K}(\lambda)$ is a quantum harmonic oscillator and it gives a nice approximation of the low-energy subspace of $\hat{H}(\lambda)$ for large $\lambda$.

\begin{theorem}[Theorem 4.1, \cite{simon1983semiclassical}]\label{thm:semiclassical}
    Assume $f(x)$ is polynomially bounded (i.e., $|f(x)| \le C(1+|x|^m)$ for some constant $C$ and a fixed integer $m$). Let $n$ be a fixed integer such that $E_n(\lambda)$ is a simple\footnote{A \emph{simple} eigenvalue means the multiplicity is $1$. In other words, the corresponding eigenspace is non-degenerate.} eigenvalue of $\hat{H}(\lambda)$. Let $\Phi_n(\lambda)$ be the corresponding eigenfunction. Denote $e_n$ and $\phi_n(\lambda)$ to be the $n$-th eigenvalue and eigenfunction of $\hat{K}(\lambda)$. Then, for sufficiently large $\lambda$, we have that
    \begin{align}
        E_n(\lambda) - e_n = \bigO(\lambda^{-1}),~
        \|\Phi_n(\lambda) - \phi_n(\lambda)\| = \bigO(\lambda^{-1/2}).
    \end{align}
\end{theorem}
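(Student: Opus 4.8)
The plan is to reduce the statement to regular perturbation theory for the quantum harmonic oscillator by means of a semiclassical rescaling. Assume without loss of generality that $x^* = 0$. The ground state of $\hat{K}(\lambda)$ concentrates on the length scale $\lambda^{-1/2}$, which suggests the change of variables $y = \sqrt{\lambda}\, x$, implemented by the unitary dilation $(U_\lambda \psi)(y) = \lambda^{-1/4}\psi(\lambda^{-1/2}y)$ on $\H$. Under $U_\lambda$ the kinetic term becomes $\lambda$-independent: a direct computation gives $U_\lambda \hat{K}(\lambda) U_\lambda^{-1} = \hat{K}_0$, where $\hat{K}_0 = -\tfrac{1}{2}\partial_y^2 + \tfrac{1}{2}\gamma^2 y^2$ is a \emph{fixed} harmonic oscillator with simple eigenvalues $e_n = (n+\tfrac12)\gamma$ and Hermite eigenfunctions $\phi_n$. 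Applying the same dilation to $\hat{H}(\lambda)$ yields $U_\lambda \hat{H}(\lambda) U_\lambda^{-1} = \hat{K}_0 + W(\lambda)$, where the anharmonic remainder is $W(\lambda)(y) = \lambda f(\lambda^{-1/2}y) - \tfrac{1}{2}\gamma^2 y^2$. Since $U_\lambda$ is unitary it preserves both the spectrum and all $L^2$-norms, so it suffices to estimate the eigenvalue and eigenfunction of $\hat{K}_0 + W(\lambda)$ against those of $\hat{K}_0$.

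Next I would expand the perturbation in the small parameter $\epsilon := \lambda^{-1/2}$. Taylor expanding $f$ about $0$ (using $f(0)=\nabla f(0)=0$, $f''(0)=\gamma^2$) gives, formally, $W(\lambda) = \epsilon\, W_1 + \epsilon^2 W_2 + \dots$ with $W_1(y) = \tfrac{1}{6}f'''(0)\,y^3$ and $W_2(y) = \tfrac{1}{24}f^{(4)}(0)\,y^4$. This places us squarely in the setting of Rayleigh--Schr\"odinger perturbation theory for the simple eigenvalue $e_n$. Two observations then produce the stated rates. First, the first-order eigenvalue shift vanishes by parity: $W_1$ is odd while $|\phi_n|^2$ is even, so $\langle \phi_n, W_1 \phi_n\rangle = 0$; hence the leading eigenvalue correction comes from the second-order term in $\epsilon W_1$ together with the first-order term in $\epsilon^2 W_2$, both of order $\epsilon^2$, yielding $E_n(\lambda) - e_n = \bigO(\lambda^{-1})$. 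Second, the first-order eigenfunction correction $\epsilon \sum_{m \ne n} \frac{\langle \phi_m, W_1 \phi_n\rangle}{e_n - e_m}\phi_m$ is already of order $\epsilon$ and does not benefit from the parity cancellation (the matrix elements of $y^3$ between Hermite functions are finite and nonvanishing for $m = n \pm 1, n\pm 3$), while the uniform gap $e_n - e_m \in \gamma\Z \setminus \{0\}$ keeps the denominators bounded away from zero; its norm is $\bigO(\epsilon) = \bigO(\lambda^{-1/2})$, which transfers back through $U_\lambda^{-1}$ to give $\|\Phi_n(\lambda) - \phi_n(\lambda)\| = \bigO(\lambda^{-1/2})$. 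This also explains why the eigenfunction rate is genuinely weaker than the eigenvalue rate.

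The main obstacle is that this perturbation series is only \emph{formal}: $W(\lambda)$ is an unbounded, polynomially growing potential, and terms such as $y^4$ are \emph{not} $\hat{K}_0$-bounded with relative bound below one, so the Kato--Rellich theorem does not apply directly and the series is at best asymptotic rather than convergent. Making the bounds rigorous is the analytic heart of the theorem. I would control the remainder by combining two ingredients: (i) Agmon-type exponential decay estimates for the eigenfunctions of $\hat{K}_0$ (and of $\hat{K}_0 + W(\lambda)$), which guarantee that all the polynomial moments appearing above are finite and that the tail of the Taylor expansion contributes only at higher order in $\epsilon$; and (ii) a two-sided variational argument, using $\phi_n$ and the orthonormalized low-lying Hermite functions as trial states to upper-bound $E_n(\lambda)$ via the min-max principle, together with a matching lower bound obtained from an IMS-type localization separating the harmonic well from the region where the polynomial growth of $f$ dominates. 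The hypothesis $|f(x)| \le C(1+|x|^m)$ is precisely what is needed to make the localization error uniformly small as $\lambda \to \infty$, so that both bounds close at the advertised orders $\bigO(\lambda^{-1})$ and $\bigO(\lambda^{-1/2})$.
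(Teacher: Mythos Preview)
The paper does not prove this theorem: it is quoted verbatim as Theorem~4.1 of Simon~\cite{simon1983semiclassical} and used as a black box, so there is no ``paper's own proof'' to compare against. Your sketch is, however, essentially the strategy Simon employs in that reference: the unitary dilation $y=\sqrt{\lambda}\,x$ reducing $\hat{H}(\lambda)$ to a fixed harmonic oscillator plus an anharmonic remainder of order $\lambda^{-1/2}$, followed by Rayleigh--Schr\"odinger perturbation theory with the parity cancellation $\langle \phi_n, y^3 \phi_n\rangle=0$ explaining why the eigenvalue correction is $\bigO(\lambda^{-1})$ while the eigenfunction correction is only $\bigO(\lambda^{-1/2})$. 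Your identification of the analytic difficulty (the perturbation is not relatively bounded, so the expansion is asymptotic rather than convergent) and the proposed remedy (exponential decay of eigenfunctions combined with min--max/IMS localization) are also in line with the standard treatment; the polynomial-growth hypothesis on $f$ is indeed what makes the localization errors controllable. As a plan this is sound and matches the cited source; the work you have not yet done is the actual uniform-in-$\lambda$ control of the remainder, which in Simon's paper occupies the bulk of the argument.
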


The ground state of the quantum harmonic oscillator $\hat{K}(\lambda)$ is a Gaussian state
\begin{align}\label{eqn:gaussian-ground-state}
    \phi_0(\lambda) = \left(\frac{\lambda \gamma}{\pi}\right)^{1/4} e^{-\frac{\gamma \lambda (x-x^*)^2}{2}},
\end{align}
and the eigenvalues of $\hat{K}$ are given by $e_n = \left(n + \frac{1}{2}\right)\gamma$.

As $\lambda$ increases, the Gaussian state $\phi_0(\lambda)$ is localized in the vicinity of the global minimizer $x^*$ and its tail (i.e., the probability of landing far away from $x^*$) is exponentially small. One would naturally guess that $\Phi_0(\lambda)$, the actual ground state of the Schr\"odinger operator \eqn{interpolated_operator}, also has a very small tail for large $\lambda$. However, this is not immediately clear from \thm{semiclassical}. In \sec{complexity}, we will prove a stronger result to characterize the smallness of the tail of $\Phi_0(\lambda)$ (see \lem{subgaussian}). It turns out that, given a fast-growing potential field $f$, the ground state $\Phi_0(\lambda)$ corresponds to a \textit{sub-Gaussian} distribution whose variance scales proportional to $1/\lambda$.

\subsection{Quantum adiabatic theorem for unbounded Hamiltonian}
Given a quantum Hamiltonian $H(t)$, $t \in [0, t_f]$, we consider the dynamics described by the Schr\"odinger equation,
\begin{align}\label{eqn:adiabatic}
    i \epsilon \frac{\d}{\d t}\ket{\psi^\epsilon(t)} = H(t)\ket{\psi^\epsilon(t)},
\end{align}
subject to an initial state $\ket{\psi^\epsilon(0)} = \ket{\psi_0}$. We suppose that $U^\epsilon(t)$ is the propagator of the dynamics described by \eqn{adiabatic}, so the solution at time $t$ is given by
\begin{align}
    \ket{\psi^\epsilon(t)} = U^\epsilon(t) \ket{\psi_0}.
\end{align}

The parameter $\epsilon > 0$ controls the time scale on which the quantum Hamiltonian $H(t)$ varies. For a small $\epsilon$, the system evolves slowly and the dynamics are relatively simple: if the system begins with an eigenstate of $H(0)$, it remains close to an eigenstate of $H(t)$. This process is called \textit{adiabatic quantum evolution}.

Formally, We define a new Hamiltonian operator,
\begin{align}\label{eqn:adb_ham}
    H_a(t) = H(t) + i \epsilon[\dot{P},P],
\end{align}
where $P(t)$ is a rank-1 projector onto the ground state of $H(t)$, $\dot{P}$ is the time derivative of $P$, and $[\cdot,\cdot]$ is the commutator. Let $U^\epsilon_a(t)$ be the propagator of the quantum dynamics generated by \eqn{adb_ham}, i.e., $U^\epsilon_a(t)$ is given as the solution of
\begin{align}
    i \epsilon \frac{\d}{\d t} U^\epsilon_a(t) = H_a(t) U^\epsilon_a(t),
\end{align}
subject to $U^\epsilon_a(0) = \mathbf{1}$. The propagator $U^\epsilon_a(t)$ is called the \textit{adiabatic intertwiner} because it preserves the ground-energy subspace of $H(t)$~\cite{kato1950adiabatic}. Precisely, we have that 
\begin{align}
    U^\epsilon_a(t) P(0) = P(t) U^\epsilon_a(t).
\end{align}

The quantum adiabatic theorem states that adiabatic intertwiner (i.e., the exact adiabatic propagator) $U^\epsilon_a(t)$ is a good approximation of the actual propagator $U^\epsilon(t)$ for sufficiently small $\epsilon$. Various formulations of the quantum adiabatic theorem exist in the literature, while most of them assume the Hamiltonian is a bounded linear operator. In Quantum Hamiltonian Descent, however, the Hamiltonian is an unbounded operator defined in the real space $\R^d$. Here, we introduce a quantum adiabatic theorem for unbounded Hamiltonian~\cite{teufel2003adiabatic,mozgunov2023quantum}, which allows us to have a neat analysis without discretizing the unbounded operator in QHD.

For a self-adjoint operator $H(t)$ with discrete spectrum at any $t$, the spectral gap of $H(t)$, denoted by $\Delta(t)$, is the difference between the first two eigenvalues of $H(t)$ at time $t$.

\begin{theorem}[Theorem 2.1~\cite{mozgunov2023quantum}]\label{thm:adiabatic_unbounded}
    Assume that for all $t \in [0, t_f]$, there exist positive numbers $c_0$ and $c_1$ such that 
    \begin{align}\label{eqn:bounded_Hdot}
        \dot{H}^2 \le c_0 + c_1 H^2.
    \end{align}
    Moreover, we assume $H > 0$ and the spectral gap $\Delta(t)$ has a uniform lower bound, i.e., $\Delta(t) \ge \Delta_{\min} > 0$.
    We denote $\phi_0$ as the ground state of $H(0)$.
    Then, we have that
    \begin{align}\label{eqn:adb_err}
        \|\left(U^\epsilon(t_f) - U^\epsilon_a(t_f)\right)\phi_0\| \le \epsilon \theta t_f,
    \end{align}
    where $\theta$ is a constant that only depends on $c_0$, $c_1$ and $\Delta_{\min}$.
\end{theorem}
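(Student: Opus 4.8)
The plan is to prove the bound \eqn{adb_err} by the classical two-step strategy behind every adiabatic theorem --- a Duhamel (variation-of-constants) identity followed by an integration by parts that extracts the adiabatic factor $\epsilon$ --- while replacing every operator-norm estimate on the unbounded $H$ and $\dot H$ by a relative estimate drawn from the hypothesis \eqn{bounded_Hdot}. First I would compare the two propagators through the intertwiner. Writing $U^\epsilon(t_f,s)$ for the propagator of $H$ from $s$ to $t_f$ and differentiating the product $U^\epsilon(t_f,s)U^\epsilon_a(s)$ in $s$, the two generators differ exactly by the Kato term $i\epsilon[\dot P,P]$ appearing in \eqn{adb_ham}, so the $1/\epsilon$'s cancel and one obtains
\begin{align}\label{eqn:duhamel-plan}
    \left(U^\epsilon(t_f) - U^\epsilon_a(t_f)\right)\phi_0 = -\int_0^{t_f} U^\epsilon(t_f,s)\,[\dot P(s),P(s)]\,U^\epsilon_a(s)\phi_0\,\d s .
\end{align}

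Next I would exploit the intertwining property $U^\epsilon_a(s)P(0)=P(s)U^\epsilon_a(s)$: since $\phi_0=P(0)\phi_0$, the vector $U^\epsilon_a(s)\phi_0$ stays in the range of $P(s)$, and $[\dot P,P]P=\dot P P=(1-P)\dot P P$ maps it into the excited subspace. Using $HP=PH=E_0 P$ together with the relation $(1-P)\dot H P=(H-E_0)\dot P P$ one gets $\dot P(s)P(s)=-R_0(s)\dot H(s)P(s)$, where $R_0=(1-P)(H-E_0)^{-1}(1-P)$ is the reduced resolvent with $\|R_0\|\le \Delta_{\min}^{-1}$ by the gap hypothesis. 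Hence the integrand equals $U^\epsilon(t_f,s)R_0(s)\dot H(s)\phi_0(s)$ up to a scalar phase. A naive bound on \eqn{duhamel-plan} only gives $\bigO(t_f)$, so the decisive step is to integrate by parts: because the integrand lives in the excited subspace I would write $R_0=(H-E_0)^{-1}$ on that subspace and use $U^\epsilon(t_f,s)(H(s)-E_0(s))=-i\epsilon\,\partial_s U^\epsilon(t_f,s)-E_0(s)U^\epsilon(t_f,s)$ to transfer the $s$-derivative off the rapidly oscillating $U^\epsilon(t_f,s)$ onto the slowly varying factors $R_0(s)$, $\phi_0(s)$, $E_0(s)$. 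Each such integration by parts produces one power of $\epsilon$; the resulting boundary terms and remainder integral involve $R_0$, $\dot R_0$, $\dot P$, $\ddot P$, and $\dot H$, all controlled by $\Delta_{\min}^{-1}$ and derivatives of the spectral data.

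The remaining work is to show every factor of $\dot H$ produced this way is bounded by a constant depending only on $c_0,c_1,\Delta_{\min}$. This is where \eqn{bounded_Hdot} enters: as operators it gives $\|\dot H\psi\|\le\sqrt{c_0}\,\|\psi\|+\sqrt{c_1}\,\|H\psi\|$, so whenever $\dot H$ acts on a ground- or excited-subspace vector whose $H$-norm is itself controlled through the gap and the reduced resolvent, the transition-generating terms are uniformly bounded and their time integrals scale linearly in $t_f$. Collecting the boundary and remainder contributions then yields $\|(U^\epsilon(t_f)-U^\epsilon_a(t_f))\phi_0\|\le \epsilon\,\theta\,t_f$ with $\theta$ a function of $c_0,c_1,\Delta_{\min}$ only.

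I expect the main obstacle to be precisely this last point, together with the rigorous justification of the formal calculus in the unbounded setting. Unlike the textbook bounded case, one cannot invoke $\|\dot H\|$ or $\|H\|$, so the integration by parts must be arranged so that each unbounded operator appears only in a combination that the relative bound \eqn{bounded_Hdot} and the reduced resolvent can tame; one must also verify that $U^\epsilon_a(s)\phi_0$ stays in the operator domains, that $P(t)$ is twice differentiable with the stated bounds, and that the boundary terms are finite --- domain and self-adjointness issues that are invisible in the finite-dimensional formulation but central here.
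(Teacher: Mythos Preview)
The paper does not prove this statement at all: \thm{adiabatic_unbounded} is quoted verbatim as Theorem~2.1 of \cite{mozgunov2023quantum}, with only a remark afterward explaining the time rescaling that converts the original formulation into the form \eqn{adb_err}. There is therefore nothing in the paper to compare your argument against.

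That said, your outline is a faithful sketch of the Kato--Avron--Seiler--Yaffe strategy that underlies essentially all rigorous adiabatic theorems, and your identification of the key difficulty --- replacing every occurrence of $\|\dot H\|$ by the relative bound coming from \eqn{bounded_Hdot}, and keeping the formal integration-by-parts honest on the operator domains --- is exactly what distinguishes the unbounded case treated in \cite{mozgunov2023quantum} (and earlier in \cite{teufel2003adiabatic}) from the textbook bounded version. If you want to turn this into a self-contained proof you would need to carry out precisely the domain and differentiability checks you flag at the end; the paper itself simply defers all of that to the cited reference.
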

\begin{remark}
    In \thm{adiabatic_unbounded}, we rescale the time variable compared to the original version in \cite[Theorem 2.1]{mozgunov2023quantum}. After the time rescaling, the adiabatic error from the original theorem reads $b = \frac{\theta t_f}{1/\epsilon} = \epsilon \theta t_f$, which is precisely what we have in \eqn{adb_err}. 
\end{remark}

\subsection{Quantum simulation of Schr\"odinger equations}\label{sec:q_simulation}
In Quantum Hamiltonian Descent~\cite{leng2023quantum}, a major subroutine is to simulate the Schr\"odinger equation, a task known as \textit{quantum simulation}.
Quantum simulation is a prominent application of quantum computation, and several efficient quantum algorithms for simulating real-space quantum dynamics have been proposed, including \cite{wiesner1996simulations,zalka1998efficient,an2021time,childs2022quantum,an2022time}. In this paper, we employ a quantum simulation algorithm due to Childs, Leng, Li, Liu, and Zhang~\cite{childs2022quantum}. The complexity of this algorithm has near-optimal dependence in the dimension $d$ and accuracy $\eta$ by leveraging the pseudo-spectral representation and interaction-picture quantum simulation.
We consider the Schr\"odinger equation over the time interval $[t_0, t_1]$ for a given time-dependent potential $V(x,t)$,
\begin{align}\label{eqn:schrodinger}
    i \frac{\partial}{\partial t}\ket{\Psi(x, t)} = \left[-\frac{1}{2}\nabla^2 + V(x, t)\right]\ket{\Psi(x, t)},
\end{align}
where we specify $\Omega = [-M, M]^d$ for a sufficiently large $M$ and $V(x, t)\colon \Omega \times [t_0, t_1]\to \R$ is a time-dependent potential function, $\Psi(x,t)\colon \Omega \times [t_0, t_1] \to \C$ is the wave function subject to certain initial condition and the periodic boundary condition.

In \cite[Theorem 8]{childs2022quantum}, the complexity of simulating \eqn{schrodinger} involves an additional parameter $g'$ that depends on the regularity of the wave function $\Psi(x,t)$. We slightly improve this result by assuming the initial condition is \textit{analytic}, which is the case in QHD.

\begin{theorem}\label{thm:spectral-method}
    Suppose the potential field $V(x, t)$ is bounded, smooth in $x$ and $t$, and periodic in $x$. Moreover, we assume the initial data $\Psi_0(x)$ is analytic on $\Omega$ and $V$ is $L$-Lipschitz in $t$. We define $\|V\|_{\infty,1}\coloneqq \int^{t_1}_{t_0}\|V(\cdot,t)\|_{\infty}~\d t$.
    Then, the Schr\"odinger equation \eqn{schrodinger} can be simulated for time $t\in [t_0, t_1]$ up to accuracy $\eta$ with the following cost:
    \begin{enumerate}
        \item Queries to the quantum evaluation oracle $O_V$: $\bigO\left(\|V\|_{\infty,1} \frac{\log(\|V\|_{\infty,1}/\eta)}{\log\log(\|V\|_{\infty,1}/\eta)}\right)$,
        \item 1- and 2-qubit gates: 
        $$\bigO\left(\|V\|_{\infty,1}\left(\poly(z) + \log^{2.5}\left(L\|V\|_{\infty,1}/\eta\right) + d\log\log(1/\eta)\right)\frac{\log(\|V\|_{\infty,1}/\eta)}{\log\log(\|V\|_{\infty,1}/\eta)}\right).$$
    \end{enumerate}
\end{theorem}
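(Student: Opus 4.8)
The plan is to reduce everything to the near-optimal real-space simulation algorithm of Childs, Leng, Li, Liu, and Zhang, \cite[Theorem 8]{childs2022quantum}, and then remove the explicit dependence on the regularity parameter $g'$ appearing there by exploiting analyticity of the initial data. Recall that \cite[Theorem 8]{childs2022quantum} simulates \eqn{schrodinger} in the interaction picture: the query count to $O_V$ scales with the time-integrated potential norm $\|V\|_{\infty,1}$ (this is the source of the factor $\|V\|_{\infty,1}\frac{\log(\|V\|_{\infty,1}/\eta)}{\log\log(\|V\|_{\infty,1}/\eta)}$), while the gate count additionally pays for (i) the number of qubits $d\log n$ used to discretize $\Omega$ on $n$ grid points per dimension, (ii) the quantum Fourier transforms that switch between the position and momentum representations, and (iii) the truncation order of the Dyson series, which is governed by the Lipschitz constant $L$. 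The parameter $g'$ enters because the spatial truncation error is controlled by how fast the spectral content of the wave function decays; my goal is to bound $g'$, and hence $n$, by a polylogarithmic quantity.

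First I would establish a spectral-decay estimate for the solution. Since $\Psi_0$ is analytic on the periodic box $\Omega$, its Fourier coefficients decay exponentially, $|\widehat{\Psi_0}(k)| \le C e^{-\sigma|k|}$ for some strip width $\sigma > 0$ (the Paley--Wiener correspondence for periodic analytic functions). I would then track this decay along the evolution: the kinetic propagator acts in momentum space as multiplication by the unit-modulus phase $e^{-\i t |k|^2/2}$ and therefore preserves $|\widehat{\Psi}(k)|$ exactly, while the bounded smooth potential contributes a multiplication operator, i.e.\ a convolution by $\widehat{V}$ in momentum space, whose effect on the Fourier tail can be controlled over the finite interval $[t_0,t_1]$. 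The upshot I am after is a uniform-in-time statement: truncating $\Psi(\cdot,t)$ to the modes $|k|\le N$ incurs $L^2$-error at most $\eta$ provided $N = \bigO(\log(1/\eta))$ up to factors depending on $\sigma$ and $\|V\|_{\infty,1}$.

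Granting this estimate, the rest is bookkeeping on top of \cite[Theorem 8]{childs2022quantum}. Taking $n = 2N+1 = \tbigO(\log(1/\eta))$ grid points per dimension makes the number of qubits $d\log n = \tbigO(d\log\log(1/\eta))$, which is the origin of the $d\log\log(1/\eta)$ term; the per-step quantum Fourier transforms and arithmetic between the two representations cost $\poly(\log n) = \poly(z)$ gates; and truncating the Dyson series to the accuracy dictated by the $L$-Lipschitz assumption yields the $\log^{2.5}(L\|V\|_{\infty,1}/\eta)$ factor. Substituting the polylogarithmic bound for $g'$ (equivalently for $n$) into the complexity of \cite[Theorem 8]{childs2022quantum} while keeping the interaction-picture query scaling gives exactly the two claimed costs.

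I expect the main obstacle to be the spectral-decay estimate, because $V$ is assumed only smooth rather than analytic, so the solution need not remain analytic with a fixed strip width. The honest route is to prove a Gevrey-type a priori bound on the high Sobolev norms $\|\Psi(\cdot,t)\|_{H^s}$ with explicit dependence on $s$, $t$, and the derivatives of $V$, and then optimize the truncation level $N$ against $\eta$; in the QHD application one can further exploit that the potential is a low-degree polynomial, hence analytic, so that the strip width degrades at most controllably in $t$ and the exponential Fourier decay needed for $N = \poly(\log(1/\eta))$ is secured. Pinning down the constants in this estimate precisely enough to land the stated $\log^{2.5}$ exponent is the delicate step; everything downstream is a direct substitution into the cited theorem.
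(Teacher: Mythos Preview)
Your overall strategy---bound the spatial truncation number $n$ by $\bigO(\log(1/\eta))$ using analyticity of the initial data, then substitute into \cite[Theorem~8]{childs2022quantum}---is exactly what the paper does. The divergence is in how the ``main obstacle'' you flag is handled. You propose tracking Fourier tails through the kinetic/potential splitting and, failing that, deriving Gevrey-type $H^s$ bounds with explicit $s$-dependence. The paper instead short-circuits this by invoking \cite[Proposition~1]{bourgain1999growth}, which gives a Sobolev growth bound of the form $\|\Psi^{(k)}(\cdot,t)\| \le A t\,\|\Psi^{(k)}_0\|$ with $A$ independent of $k$. Combined with the factorial bound $|\Psi_0^{(k)}|\le C^{k+1}k!$ from analyticity of $\Psi_0$, this shows $\Psi(\cdot,t)$ remains analytic in a strip of width $\sim 1/(ACt)$; Cauchy's integral formula then yields $|c_k(t)|\le r^k$ for $r<\min(1/2,1/(At))$, so the truncation error is $\le 2r^{n/2+1}$ and $n=\bigO(\log(1/\eta))$ suffices. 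After that the paper simply plugs $n$ into Equation~(113) of \cite{childs2022quantum}.

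So your proposal is not wrong, but the route you anticipate being ``delicate'' is in fact a one-line citation in the paper; you would save yourself the Gevrey analysis by locating Bourgain's uniform-in-$k$ Sobolev growth estimate. Your kinetic/potential splitting heuristic, by contrast, does not obviously close without such an input, since controlling the convolution by $\widehat{V}$ on the Fourier tail uniformly over $[t_0,t_1]$ is precisely the content of a growth-of-norms result.
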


The proof of the theorem is available in \append{simulation}. Note that the quantum simulation algorithm in \cite{childs2022quantum} also requires two quantum oracles other than $O_V$, namely, the \textit{inverse change-of-variable} oracle $O_{\mathrm{inv}}$ and the \textit{max-norm} oracle $O_{\mathrm{norm}}$ (see Lemma 5~\cite{childs2022quantum}). In QHD, the potential function $V(x,t)\coloneqq \varphi(t) V(x)$, where $\varphi(t)$ is a time-dependent function described by a closed-form formula (see~\eqn{varphi}) and $V(x)$ is given by $O_V$. In this case, the two oracles $O_{\mathrm{inv}}$ and $O_{\mathrm{norm}}$ are efficiently implemented without querying the function $V$.

\section{Construction of nonconvex optimization instances}\label{sec:instances}
\subsection{The spectral gap of asymmetric double well}\label{sec:gap}
In this section, we use a numerical example to illustrate the theory of Schr\"odinger operators that we discussed in \sec{prelim}. We consider the following 1-parameter family of Hamiltonian operators
\begin{align}\label{eqn:lambda_ham}
    \hat{H}(\lambda) = \frac{1}{\lambda}\left(-\frac{1}{2}\nabla^2\right) + \lambda w(x),
\end{align}
where $w(x)\colon \R \to \R$ is a one-dimensional double-well function given by a degree-4 polynomial:
\begin{align}\label{eqn:wx}
    w(x) = x^4 - \left(x - \frac{1}{32}\right)^2 - c,
\end{align}
where $c \approx -0.296$ such that the global minimum of $w(x)$ is zero.

\begin{figure}[h]
    \centering
    \includegraphics[width=14cm]{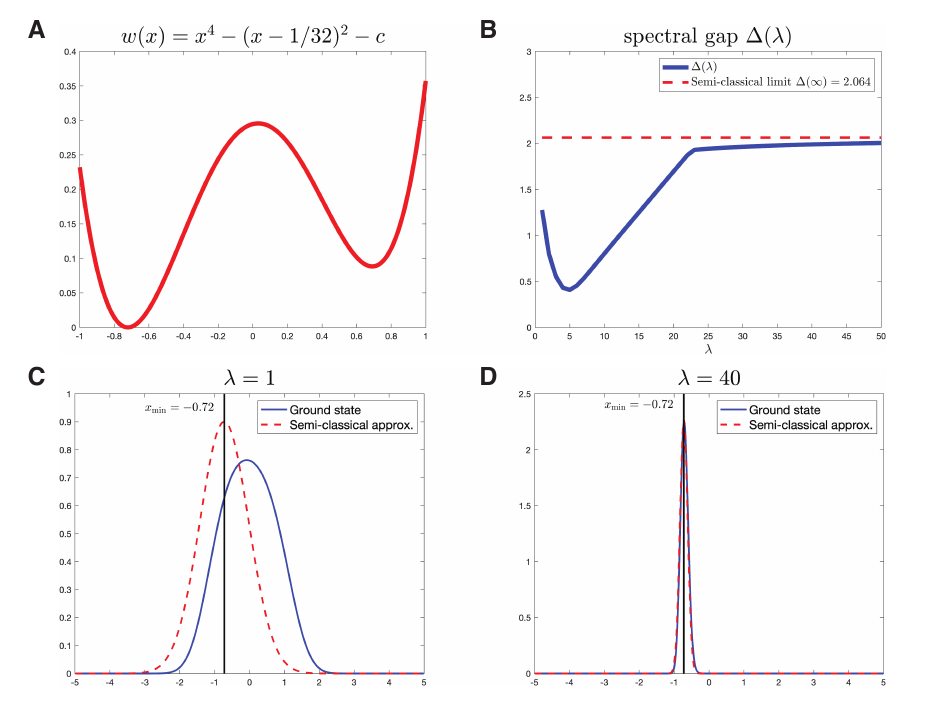}
    \caption{\small \textbf{The spectral gap and ground state of asymmetric double well.}
    \textbf{A.} Plot of the double well function $w(x)$.
    \textbf{B.} The spectral gap $\Delta(\lambda)$ of the Hamiltonian operator $\hat{H}(\lambda)$ as a function of $\lambda$.
    \textbf{C, D.} The blue curve (or the red dashed curve, resp.) represents the ground state (or the harmonic approximation of the ground state, resp.) of the Hamiltonian operator $\hat{H}(\lambda)$ for $\lambda = 1, 40$. The $x$ axis represents the one-dimensional space and the $y$ axis shows the amplitude of the quantum states. When $\lambda$ becomes larger, the harmonic approximation becomes more accurate.
    }
    \label{fig:double_well}
\end{figure}

In \fig{double_well}A, we plot the graph of the function $w$ between $x = -1$ and $x = 1$. This function has two local minima: 
\begin{align*}
    x_1 &\approx -0.722,~w(x_1) = 0,\\
    x_2 &\approx 0.691,~w(x_2) = 0.088.
\end{align*}
For any $\lambda > 0$, the spectral theory implies that the spectrum of $\hat{H}(\lambda)$ is discrete. We denote the spectrum of the Hamiltonian $\hat{H}(\lambda)$ as $\{E_0(\lambda) < E_1(\lambda) \le \dots\}$. For large $\lambda$, the first few eigenvalues of $\hat{H}(\lambda)$ can be computed using the semi-classical approximation. Note that $w$ has a positive Hessian at the global minimum: $w''(x_1) \approx 4.5092$. Let $\gamma^2 = w''(x_1)$. For $\lambda \gg 1$, by \thm{semiclassical}, we have that
\begin{align}
    E_0(\lambda) = \frac{1}{2}\gamma + \bigO(\lambda^{-1}),\quad E_1(\lambda) = \frac{3}{2}\gamma + \bigO(\lambda^{-1}),
\end{align}
which implies that the spectral gap in the large $\lambda$ limit is
\begin{align}\label{eqn:w_limit}
    \lim_{\lambda \to \infty}\Delta(\lambda) = \gamma \approx 2.064.
\end{align}

We employ numerical methods to compute the spectral gap of the operator $\hat{H}(\lambda)$. Our numerical results confirm our estimate of the spectral gap in the limit of large $\lambda$. 
In \fig{double_well}B, we show the spectral gap $\Delta(\lambda) = E_1(\lambda) - E_0(\lambda)$ as a function of $\lambda$. For large $\lambda$, it is clear that the spectral gap eventually converges to the predicted value $\gamma$ (indicated by the red dashed line). We also find that the spectral gap $\Delta(\lambda)$ achieves its minimum $\Delta_{\min} \approx 0.408$ at around $\lambda = 5$. For $\lambda > 5$, the spectral gap $\Delta(\lambda)$ starts to increase and gradually approaches the semi-classical limit.
At first glance, it is a bit surprising to see that the spectral gap $\Delta(\lambda)$ has a lower bound for all $\lambda > 0$ even when the potential function $w$ is nonconvex. This means that the minimal spectral gap $\Delta_{\min}$ is actually an intrinsic property that only depends on the geometry of the double-well potential $w$ but not the interpolating parameter $\lambda$. Therefore, it is reasonable to formalize our observation by defining a class of \textit{gapped} potential function $w$.

\begin{definition}[Gapped 1D potential]\label{defn:gappedness}
    Given a 1-dimensional twice differentiable function $w\colon \R \to \R$ such that $w$ is bounded from below and diverges to infinity as $|x| \to \infty$. For any $\lambda > 0$, let $\Delta(\lambda)$ denote the spectral gap of the Hamiltonian operator~\eqn{lambda_ham}.
    We say this function $w$ is \textbf{gapped} if the spectral gap is uniformly bounded by a positive constant $\Delta_{\min} > 0$, i.e., 
    \begin{align}
        \forall \lambda > 0\colon \Delta(\lambda) \ge \Delta_{\min} > 0.
    \end{align}
\end{definition}

We believe this gap condition holds for many asymmetric double-well potential functions. We will use these gapped asymmetric double wells to construct optimization instances with exponentially many local minima, and we will show that there exist efficient quantum algorithms (such as QHD) that can find the global solution in polynomial time.

Another numerical observation we made is that the semi-classical approximation works for not only the spectral gap but also the ground state. In panels C and D in \fig{double_well}, we plot the ground state of the Hamiltonian $\hat{H}(\lambda)$ (in blue solid line) and the semi-classical approximation given by \eqn{gaussian-ground-state} (in red dashed line) for $\lambda = 1, 40$. Clearly, when $\lambda$ is large, the Gaussian state given by the semi-classical theory is a very accurate approximation of the ground state.

\subsection{1D model problem}
We now formulate our 1D model problem.

\begin{definition}\label{defn:1}
    We call a non-negative function $w\colon \R \to \R$ a \textbf{well-formed} asymmetric double well if $w$ satisfies the following conditions:
    \begin{enumerate}
        \item $w$ has at least 2 local minima and a unique non-degenerate global minimum at $x = x^*$, i.e., $w'(x^*) = 0$ and $w''(x^*) \ge \gamma^2 > 0$;
        \item There exist positive numbers $C > c > 0$ such that $c|x-x^*|^4 \le w(x) \le C(1+|x-x^*|^4)$ for all $x\in \R$;
        \item $w\in C^{\infty}$ and $w$ is $\rho$-smooth, i.e., $\|w''\| \le \rho$;\footnote{Precisely speaking, a degree-4 polynomial is not $\rho$-smooth, as its Hessian is a quadratic function that diverges at infinity. Thanks to the localization of wave function in the quantum evolution (because $w$ grows as $|x|^4$), we may restrict the evolution to a compact domain in quantum simulation. In this case, the double-well potential function is effectively $\rho$-smooth.}
        \item $w$ is a gapped function in the sense of \defn{gappedness}.
    \end{enumerate}
\end{definition}

In \sec{gap}, we use numerical methods to study the function $w(x) = x^4 - \left(x - \frac{1}{32}\right)^2 - c$, where $c \approx -0.296$ such that $\min_x w(x) = 0$. It is clear that this $w(x)$ satisfies the first two conditions in \defn{1}. If we restrict $w(x)$ to a compact subset of $\R$ (which is the case in our quantum simulation), condition 3 is satisfied. Meanwhile, our numerical results in \fig{double_well} show that the spectral gap $\Delta(\lambda)$ achieves its minimum near $\lambda = 5$ and then $\Delta(\lambda)$ converges to the semi-classical limit, aligned with the theoretical prediction given by \thm{semiclassical}. Therefore, with the numerical evidence, we believe the function $w(x)$ is a \emph{well-formed} asymmetric double well. This function will later be used in our empirical study (see \sec{empirical}).  

\vspace{4mm}
When the function $w$ is a well-formed asymmetric double well, we can apply the adiabatic theorem for unbounded Hamiltonian (i.e., \thm{adiabatic_unbounded}) to prepare the ground state of the Schr\"odinger operator 
$$\hat{H}(\lambda) = \frac{1}{\lambda}\left(-\frac{1}{2}\nabla^2\right) + \lambda w(x)$$
for very large $\lambda$. 
Meanwhile, condition 2 in \defn{1} ensures that the ground state of $\hat{H}(\lambda)$ is sub-Gaussian (for details, see \lem{subgaussian}). These nice properties will allow us to find an approximate solution to the optimization problem $\min_x w$.

Provided with a pre-fixed parameter $\lambda_f$ (which we will discuss soon), we define a time-dependent function
\begin{align}\label{eqn:lambda_fun}
    \lambda(t) = e^{2t - \log(\lambda_f)},
\end{align}
Clearly, for $0 \le t \le \log(\lambda_f)$, we have that $1/\lambda_f \le \lambda(t) \le \lambda_f$.

\begin{lemma}\label{lem:1_dim_adiabatic}
    Let $w\colon \R \to \R$ be a well-formed asymmetric double well.
    For $0 \le t \le \log(\lambda_f)$, let $H(t) \coloneqq \hat{H}(\lambda(t))$. 
    Denote $\ket{\phi(t)}$ as the ground state of $H(t)$ for $t \in [0, \log(\lambda_f)]$.
    Let $\ket{\psi^\epsilon(t)}$ be the solution to the Schr\"odinger equation~\eqn{adiabatic} governed by $H(t)$ with the initial condition $\ket{\psi^\epsilon(0)} = \ket{\phi(0)}$. Then, for any $0 \le t \le \log(\lambda_f)$, we have that 
    \begin{align}
        \|\psi^\epsilon(t) - \tilde{\phi}(t)\| \le \epsilon \theta t,
    \end{align}
    where $\theta$ is an absolute constant that only depends on $f$. Here, $\ket{\tilde{\phi}(t)}$ only differs from $\ket{\phi(t)}$ by a global phase, i.e., 
    \begin{align}
        \tilde{\phi}(t) = e^{i\alpha(t)}\phi(t),
    \end{align}
    where $\alpha(t)$ is a real-valued function.
\end{lemma}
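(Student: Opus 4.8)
The plan is to recognize \lem{1_dim_adiabatic} as a direct instance of the adiabatic theorem for unbounded Hamiltonians (\thm{adiabatic_unbounded}), so that essentially all of the work reduces to verifying its three hypotheses for the interpolating family $H(t) = \hat{H}(\lambda(t))$ with $\lambda(t)$ as in \eqn{lambda_fun}, and then unwinding the conclusion into the stated phase-up-to-global form. First I would record the structural facts. Writing $\tilde{K} \coloneqq \frac{1}{\lambda(t)}(-\frac{1}{2}\nabla^2)$ and $\tilde{V} \coloneqq \lambda(t)\, w$, both non-negative self-adjoint operators, we have $H(t) = \tilde{K} + \tilde{V}$. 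Because $\lambda(t) = e^{2t - \log \lambda_f}$ satisfies $\dot{\lambda} = 2\lambda$, the time derivative takes the clean, scale-homogeneous form
\[
    \dot{H}(t) = 2\lambda w - \frac{2}{\lambda}\left(-\tfrac{1}{2}\nabla^2\right) = 2(\tilde{V} - \tilde{K}),
\]
which is precisely why the logarithmic schedule is chosen: the $\lambda$-factors in the cross terms will cancel.

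Next I would dispatch the two easy hypotheses. For \textbf{positivity}, since $w \ge 0$ (\defn{1}) and $-\frac{1}{2}\nabla^2 \ge 0$ we have $H(t) \ge 0$, and strict positivity $H(t) > 0$ follows because $\langle \phi, H(t)\phi\rangle = 0$ would force $\nabla \phi \equiv 0$ and $w|\phi|^2 \equiv 0$, impossible for a nonzero $L^2$ function; hence $E_0(\lambda) > 0$ for all $\lambda > 0$. For the \textbf{uniform spectral gap}, note that for $t \in [0, \log \lambda_f]$ the parameter $\lambda(t)$ ranges over $[1/\lambda_f, \lambda_f] \subset (0,\infty)$, so the gapped condition (condition 4 of \defn{1}, i.e.\ \defn{gappedness}), which holds for \emph{all} $\lambda > 0$, gives $\Delta(t) \ge \Delta_{\min} > 0$ uniformly on the whole interval.

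The \textbf{relative boundedness} condition \eqn{bounded_Hdot}, $\dot{H}^2 \le c_0 + c_1 H^2$, is the crux. Using $\dot{H} = 2(\tilde{V}-\tilde{K})$ and $\|u-v\|^2 \le 2\|u\|^2 + 2\|v\|^2$,
\[
    \|\dot{H}\phi\|^2 \le 8\left(\|\tilde{K}\phi\|^2 + \|\tilde{V}\phi\|^2\right),
\]
so it suffices to control $\|\tilde{K}\phi\|^2 + \|\tilde{V}\phi\|^2$ by $\|H\phi\|^2$. Expanding $\|H\phi\|^2 = \|\tilde{K}\phi\|^2 + \|\tilde{V}\phi\|^2 + 2\,\mathrm{Re}\langle \tilde{K}\phi, \tilde{V}\phi\rangle$, the task reduces to a lower bound on the cross term; here the $\lambda$-factors cancel and two integrations by parts give
\[
    \mathrm{Re}\langle \tilde{K}\phi, \tilde{V}\phi\rangle = \tfrac{1}{2}\int w\,|\phi'|^2\,\d x - \tfrac{1}{4}\int w''\,|\phi|^2\,\d x \ge -\tfrac{\rho}{4}\|\phi\|^2,
\]
where the first integral is non-negative because $w \ge 0$ and the second is controlled by the $\rho$-smoothness $\|w''\| \le \rho$ (condition 3 of \defn{1}). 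Hence $\|\tilde{K}\phi\|^2 + \|\tilde{V}\phi\|^2 \le \|H\phi\|^2 + \frac{\rho}{2}\|\phi\|^2$, and therefore $\dot{H}^2 \le 4\rho + 8 H^2$, which is exactly \eqn{bounded_Hdot} with $c_0 = 4\rho$ and $c_1 = 8$. I expect this estimate to be the main obstacle, and in particular the key realization is that the only dangerous piece of the cross term is $-\frac{1}{4}\int w'' |\phi|^2$, which is precisely where well-formedness condition 3 is consumed.

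Finally I would invoke \thm{adiabatic_unbounded}, whose hypotheses now hold uniformly on $[0,t]$ for every $t \le \log \lambda_f$, with $\theta$ depending only on $c_0, c_1, \Delta_{\min}$, hence only on $w$. With $\ket{\phi(0)}$ the initial ground state, \eqn{adb_err} reads $\|(U^\epsilon(t) - U^\epsilon_a(t))\phi(0)\| \le \epsilon \theta t$, and by definition $U^\epsilon(t)\phi(0) = \psi^\epsilon(t)$. For the intertwiner, applying $U^\epsilon_a(t)P(0) = P(t)U^\epsilon_a(t)$ to $\phi(0)$ (using $P(0)\phi(0) = \phi(0)$) shows $U^\epsilon_a(t)\phi(0)$ lies in the rank-one ground space of $H(t)$, so unitarity forces $U^\epsilon_a(t)\phi(0) = e^{i\alpha(t)}\phi(t) = \tilde{\phi}(t)$ for a real phase $\alpha(t)$. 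Substituting gives $\|\psi^\epsilon(t) - \tilde{\phi}(t)\| \le \epsilon \theta t$, which is the claim.
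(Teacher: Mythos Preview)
Your proposal is correct and follows essentially the same approach as the paper: verify the hypotheses of \thm{adiabatic_unbounded} (positivity, uniform gap from \defn{gappedness}, and the relative bound \eqn{bounded_Hdot} via integration by parts using $w\ge 0$ and $\|w''\|\le \rho$), then read off the conclusion and interpret $U^\epsilon_a(t)\phi(0)$ as $e^{i\alpha(t)}\phi(t)$. The only cosmetic difference is in establishing \eqn{bounded_Hdot}: the paper computes $4H^2-\dot{H}^2$ directly, using that $(\tilde K+\tilde V)^2-(\tilde V-\tilde K)^2 = 2(\tilde K\tilde V+\tilde V\tilde K)$ to obtain the sharper constants $c_0=4\rho$, $c_1=4$, whereas your preliminary triangle inequality $\|\dot H\phi\|^2\le 8(\|\tilde K\phi\|^2+\|\tilde V\phi\|^2)$ costs a factor of $2$ and yields $c_1=8$; this is immaterial since $\theta$ depends only on $c_0,c_1,\Delta_{\min}$.
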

\begin{proof}
First, we show that $H(t)$ satisfies all conditions specified in \thm{adiabatic_unbounded}. Since $w$ is non-negative, $\hat{H}(\lambda) \ge 0$ for all positive $\lambda$. The gappedness of $w$ guarantees a uniform lower bound on the spectral gap of $H(t)$. We claim that 
\begin{align}\label{eqn:estimate2}
    \dot{H}^2 \le 4\rho + 4H^2.
\end{align}

By the definition of $H(t)$, we have that
\begin{align}
    \dot{H} = \frac{\dot{\lambda}}{\lambda}\left(\frac{1}{2 \lambda} \frac{\d^2}{\d x^2} + \lambda w\right) = 2\left(\frac{1}{2\lambda} \frac{\d^2}{\d x^2} + \lambda w\right).
\end{align}
For any test function $\psi \in C^\infty_0(\R)$, we consider the inner product:
\begin{align*}
    \langle \psi, [4H^2 - \dot{H}^2]\psi\rangle &= 4\langle \psi, H^2\psi\rangle -  \langle \psi,(H')^2\psi\rangle\\
    &= -4 \left(\langle \psi, (w \psi)''\rangle + \langle \psi, w\psi''\rangle\right)\\
    &= -4\left(\langle \psi, w'' \psi\rangle + 2\langle \psi, w'\psi'\rangle + 2\langle \psi, w\psi''\rangle\right).
\end{align*}
Using integral by parts, we have that 
\begin{align}
    \langle \psi, w\psi''\rangle = \langle w\psi, \psi''\rangle = - \langle w' \psi, \psi'\rangle - \langle w \psi', \psi'\rangle = - \langle \psi, w'\psi'\rangle - \langle \psi', w\psi'\rangle,
\end{align}
which implies that
\begin{align}
    \langle \psi, [4 H^2 - \dot{H}^2]\psi\rangle &= -4\langle \psi, w'' \psi\rangle + 4 \langle \psi', w \psi'\rangle.
\end{align}
By \defn{1}, $w$ is non-negative and $\rho$-smooth, so we have $\langle \psi', w \psi'\rangle \ge 0$ and $w'' \le \rho$. Therefore, we have that
\begin{align}
    \langle \psi, [4 H^2 - \dot{H}^2]\psi\rangle \ge -4\rho,
\end{align}
which implies \eqn{estimate2}. With all conditions satisfied, we invoke \thm{adiabatic_unbounded} and obtain that
\begin{align}\label{eqn:propagator_error}
    \|\left(U^\epsilon(t) - U^\epsilon_a(t)\right)\phi(0)\| \le \epsilon \theta t,
\end{align}
where $\theta$ is a constant that only depends on $w$. Note that 
\begin{align}\label{eqn:state_eps}
    U^\epsilon(t)\ket{\phi(0)} = \ket{\psi^\epsilon(t)},
\end{align}
and $U^\epsilon_a(t) \ket{\phi(0)}$ is always in the ground-energy subspace of $H(t)$, we denote 
\begin{align}\label{eqn:state_a}
    \tilde{\phi}(t) = U^\epsilon_a(t) \ket{\phi(0)}
\end{align}
which differs from $\ket{\phi(t)}$ by a global phase because the ground-energy subspace of $H(t)$ is 1-dimensional. Substituting \eqn{state_eps} and \eqn{state_a} to \eqn{propagator_error}, we complete the proof.
\end{proof}

\subsection{Toward high-dimensional nonconvexity} \label{sec:nd}
Let $d \ge 1$ be an integer and $w(x)$ be a well-formed asymmetric double well. We define the $d$-dimensional objective function $F$:
\begin{align*}
    F(x_1,\dots,x_d) = \sum^d_{k=1}w(x_k).
\end{align*}
If the 1D model function $w(x)$ has two local minima (e.g., a double-well potential), the function $F(\vect{x})$ has $2^d$ local minima. Let $U$ be an arbitrary $d$-by-$d$ orthogonal matrix, we define a new function $F_U(\xx) \coloneqq F(U\xx)$. Now, we consider the time-dependent Hamiltonian:
\begin{align}\label{eqn:d_Hs}
    H(t) = \frac{1}{\lambda(t)} \left(-\frac{1}{2}\nabla^2\right) + \lambda(t) F_U(\vect{x}),
\end{align}
where the function $\lambda(t)$ is defined in \eqn{lambda_fun}.

\begin{proposition}\label{prop:d_dim_adiabatic}
    For $0 \le t \le \log(\lambda_f)$, let $H(t)$ be defined as in \eqn{d_Hs}. Denote $\ket{\Phi(t)}$ as the ground state of $H(t)$ for $t \in [0, \log(\lambda_f)]$. Let $\ket{\Psi^\epsilon(t)}$ be the solution to the Schr\"odinger equation~\eqn{adiabatic} governed by $H(s)$ with the initial condition $\ket{\Psi^\epsilon(0)} = \ket{\Phi(0)}$. Then, for any $0 \le t \le \log(\lambda_f)$, we have that
    \begin{align}\label{eqn:d_dim_adiabatic}
        \|\Psi^\epsilon(t) - \tilde{\Phi}(t)\| \le \epsilon \theta d t,
    \end{align}
    where $\theta$ is an absolute constant that only depends on $w(x)$, and $\ket{\tilde{\Phi}(t)}$ only differs from $\ket{\Phi(t)}$ by a global phase.
\end{proposition}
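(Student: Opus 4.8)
The plan is to exploit the separable structure of $F$ together with the rotation-invariance of the kinetic operator, reducing the $d$-dimensional statement to $d$ independent copies of the one-dimensional \lem{1_dim_adiabatic}, and then combining them with a hybrid (telescoping) argument that produces the linear-in-$d$ factor. First I would eliminate the rotation $U$. Since $U$ is orthogonal, the change of variables $\vect{y} = U\xx$ induces a unitary $R_U$ on $\H = L^2(\R^d)$ whose Jacobian is $1$, and $R_U$ commutes with $-\nabla^2$ because the Laplacian is rotation-invariant. Hence the operator $H(t)$ built from $F_U = F\circ U$ is unitarily conjugate to the Hamiltonian built from the separable potential $F(\xx) = \sum_k w(x_k)$. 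Because $R_U$ is unitary, it intertwines the two Schr\"odinger propagators, carries ground states to ground states, and preserves every $L^2$-norm difference appearing in \eqn{d_dim_adiabatic}. Therefore it suffices to prove the bound for $U = I$, i.e.\ for the separable potential $F$.

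For the separable Hamiltonian I would pass to the tensor-product decomposition $\H = \bigotimes_{k=1}^d L^2(\R)$ and write $H(t) = \sum_{k=1}^d H_k(t)$, where $H_k(t) = \frac{1}{\lambda(t)}\bigl(-\tfrac12 \partial_{x_k}^2\bigr) + \lambda(t)\, w(x_k)$ acts nontrivially only on the $k$-th factor. The summands commute across distinct factors, so the propagator factorizes as $U^\epsilon(t) = \bigotimes_{k=1}^d U_k^\epsilon(t)$, the instantaneous ground state is the product $\Phi(t) = \bigotimes_k \phi_k(t)$, and the initial state is $\Phi(0) = \bigotimes_k \phi(0)$. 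Each factor is an identical one-dimensional well-formed instance, so \lem{1_dim_adiabatic} applies verbatim in every coordinate, yielding $\|U_k^\epsilon(t)\phi(0) - \tilde\phi_k(t)\| \le \epsilon \theta t$, where $\tilde\phi_k(t)$ is the $k$-th one-dimensional ground state up to a global phase and $\theta$ is the one-dimensional constant from \lem{1_dim_adiabatic}, depending only on $w$.

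Finally I would glue the factors together. Setting $\tilde\Phi(t) = \bigotimes_k \tilde\phi_k(t)$, this product differs from $\Phi(t) = \bigotimes_k \phi_k(t)$ only by the product of the individual phases, hence by a single global phase (equivalently, $\tilde\Phi(t) = U^\epsilon_a(t)\Phi(0)$ for the factorized $d$-dimensional intertwiner). Using the elementary bound $\bigl\| \bigotimes_k a_k - \bigotimes_k b_k \bigr\| \le \sum_k \|a_k - b_k\|$ valid for unit vectors $a_k, b_k$, with $a_k = U_k^\epsilon(t)\phi(0)$ and $b_k = \tilde\phi_k(t)$, I obtain
\[
    \|\Psi^\epsilon(t) - \tilde\Phi(t)\| \;\le\; \sum_{k=1}^d \|U_k^\epsilon(t)\phi(0) - \tilde\phi_k(t)\| \;\le\; \epsilon \theta d\, t,
\]
which is exactly \eqn{d_dim_adiabatic}.

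I expect the main obstacle to be the rigorous justification of the propagator factorization $U^\epsilon(t) = \bigotimes_k U_k^\epsilon(t)$ for these unbounded, time-dependent operators: one must verify that the $H_k(t)$ share a common invariant dense core (e.g.\ $\D = C_0^\infty(\R^d)$), that each $H(t)$ is essentially self-adjoint there, and that commutativity across tensor factors survives at the level of the time-ordered exponentials rather than merely for the generators. The unitary reduction from $F_U$ to $F$ requires the same domain bookkeeping to ensure the conjugated dynamics is genuinely the one governed by the separable operator. Once these functional-analytic points are settled, the hybrid argument and the coordinatewise appeal to \lem{1_dim_adiabatic} are routine, and the only quantitative input is that summing $d$ identical one-dimensional errors yields the factor $d$. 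I would note in passing that invoking \thm{adiabatic_unbounded} directly in $\R^d$ is less attractive here: the separable computation gives an operator inequality $\dot H^2 \le 4\rho d + 4H^2$ with a $d$-dependent constant $c_0$, and the stated linear-in-$d$ error would then require tracking the dependence of $\theta$ on $c_0$, whereas the tensor-product route delivers the clean bound with $\theta$ depending on $w$ alone.
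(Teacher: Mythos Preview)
Your proposal is correct and follows essentially the same route as the paper: reduce $F_U$ to the separable $F$ via the rotation-invariance of the Laplacian, factor the dynamics and the ground state as a tensor product over the $d$ coordinates, apply \lem{1_dim_adiabatic} in each factor, and combine the $d$ one-dimensional errors by the telescoping/triangle inequality to obtain the linear factor $d$. Your side remark on why a direct $d$-dimensional invocation of \thm{adiabatic_unbounded} is less clean is apt but not needed for the argument.
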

\begin{proof}
    We consider the following Hamiltonian operator with the separable objective function $F(\vect{x})$,
    \begin{align}
        H_0(t) = \frac{1}{\lambda(t)} \left(-\frac{1}{2}\nabla^2\right) + \lambda(t) F(\vect{x}).
    \end{align}
    For $t\in[0,\log(\lambda_f)]$, let $\ket{\Psi^\epsilon_0(t)}$ be the solution to the Schr\"odinger equation,
    \begin{align}
        i \epsilon \ket{\Psi^\epsilon_0(t)} = H_0(t) \ket{\Psi^\epsilon_0(t)}.
    \end{align}
    If the initial state $\Psi^\epsilon_0(\vect{x}, 0) = \Psi^\epsilon(U^\top\vect{x}, 0)$, since the Laplacian is invariant under rotation, it is clear that  
    \begin{align}
        \Psi^\epsilon_0(\vect{x}, t) = \Psi^\epsilon(U^\top\vect{x}, t)
    \end{align}
    for all $t \in [0,\log(\lambda_f)]$. Similarly, the ground state of $H_0(t)$ is given by 
    \begin{align}
        \Phi_0(U^\top\vect{x}, t) = \Phi(U^\top\vect{x}, t),
    \end{align}
    where $\Phi(\xx, t)$ is the ground state of $H(t)$. Therefore, to show \eqn{d_dim_adiabatic}, it suffices to prove that there is a state $\ket{\tilde{\Phi}_0(t)}$ that differs from $\ket{\Phi_0(t)}$ by a global phase such that
    \begin{align}\label{eqn:rotated_back}
        \|\Psi^\epsilon_0(t) - \tilde{\Phi}_0(t)\| \le \epsilon \theta d t.
    \end{align}

    Next, we prove \eqn{rotated_back} using the triangle inequality. Note that the Hamiltonian operator $H_0(t)$ is separable in the sense that
    \begin{align}
        H_0(t) = \sum^d_{k = 1}H^{(k)}_0(t),
    \end{align}
    where for $k = 1,\dots, d$, the operator $H^{(k)}_0(t)$ is given by
    \begin{align}
        H^{(k)}_0(t) \coloneqq \frac{1}{\lambda(t)} \left(-\frac{1}{2}\frac{\partial^2}{\partial x^2_k}\right) + \lambda(t) w(x_k).
    \end{align}
    If we denote the ground state of $H^{(k)}_0(t)$ as $\ket{\phi_k(t)}$, the separability implies that the ground state of $H_0(t)$ is given by the product of the 1D ground states $\Phi_0(t) = \prod^d_{k=1}\phi_k(t)$.
    Similarly, the solution $\ket{\Psi^\epsilon_0(t)}$ is a product state given by $\Psi^\epsilon_0(t) = \prod^d_{k=1}\psi^\epsilon_k(t)$, where each $\psi^\epsilon_k(t)$ solves the 1D Schr\"odinger equation $i \epsilon \frac{\d}{\d t}\ket{\psi^\epsilon_k(t)} = H^{(k)}_0(t) \ket{\psi^\epsilon_k(t)}$. 
    By \lem{1_dim_adiabatic}, we have that 
    \begin{align}
        \|\psi^\epsilon_k(t) - e^{i\alpha_k(t)}\phi_k(t)\| \le \epsilon \theta t,
    \end{align}
    where each $\alpha_k(t)$ is a real-valued global phase. We define 
    \begin{align}
        \tilde{\Phi}_0(t) \coloneqq e^{i\sum^d_{k=1}\alpha_k(t)} \prod \prod^d_{k=1}\psi^\epsilon_k(t) = e^{i\sum^d_{k=1}\alpha_k(t)} \Phi_0(t).
    \end{align}
    Then, by applying the triangle inequality $d$ times, we end up with
    \begin{align}
        \|\Psi^\epsilon_0(t) - \tilde{\Phi}_0(t)\| \le \sum^d_{k=1}\|\psi^\epsilon_k(t) - e^{i\alpha_k(t)}\phi_k(t)\| \le \epsilon \theta d t.
    \end{align}
\end{proof}

\section{Quantum algorithms and complexity analysis}\label{sec:complexity}
\subsection{Main results}
Suppose that we fix a one-dimensional model problem $w(x)$ that satisfies \defn{1} (e.g., an asymmetric double-well function) and an integer $d \ge 1$. With $w$ and $d$, we construct a class of optimization instances $\mathscr{C}(w,d)$ as in \defn{instances}. Given an instance $f \in \mathscr{C}(w,d)$, we consider a Quantum Hamiltonian Descent algorithm with parameters $\delta, \lambda_f, \eta$ and $\eta$.

\begin{algorithm}[htbp]
    \caption{Quantum Hamiltonian Descent}
    \label{algo:qhd}
    Specify positive parameters $\delta$, $\lambda_f$, $\epsilon$, and $\eta$. Let $t_f = \log(\lambda_f)$ and $\lambda(t) = e^{2t - t_f}$. For $0 \le t \le t_f$, we define the QHD Hamiltonian over $\Omega$, $$H(t) = \frac{1}{\lambda(t)}\left(-\frac{1}{2}\nabla^2\right) + \lambda(t) f.$$\\
    Prepare the ground state of $H(0)$, denoted by $\ket{\Phi(0)}$.\\
    Use $\ket{\Phi(0)}$ as the initial state and simulate the quantum dynamics governed by the Schr\"odinger equation, 
    \begin{align}\label{eqn:qhd_params}
        i \epsilon \frac{\d}{\d t}\ket{\Psi^\epsilon(t)} = H(t) \ket{\Psi^\epsilon(t)},
    \end{align}
    up to an additive error $\eta$, i.e., a final state $\ket{\Psi^\epsilon_{\mathrm{sim}}(t_f)}$ is obtained such that $\|\Psi^\epsilon_{\mathrm{sim}}(t_f) - \Psi^\epsilon(t_f)\| \le \eta$.\\
    Measure the final state $\ket{\Psi^\epsilon_{\mathrm{sim}}(t_f)}$ using the position quadrature $\hat{\vect{x}}$ and return a $d$-dimensional vector $\vect{x}$.
\end{algorithm}

Recall that a solution $\xx$ is called a $\delta$-approximate solution if $\|\vect{x} - \vect{x}^*\| \le \delta$, where $\xx^*$ is the (unique) global minimizer of the optimization instance $f$.

\begin{theorem}\label{thm:main}
Given a well-formed asymmetric double well $w$ (see \defn{1}) and a fixed integer $d$, let $f \in \mathscr{C}(w,d)$ be an optimization instance. For a sufficiently small $\delta > 0$, we specify the choices of parameters:
\begin{align}
    \lambda_f \coloneqq \frac{\beta(2d + 6\log(3))}{\delta^2},~\epsilon \coloneqq \frac{1}{18 d \theta \log(\lambda_f)},~\eta \coloneqq \frac{1}{18},
\end{align}
where $\beta$ and $\theta$ are constants that only depend on $w$. Then, \algo{qhd} produces a $\delta$-approximate global solution to $f$ using 
$$\bigO\left(\frac{d^3}{\delta^2}\cdot\frac{\log^2(\frac{d^2}{\delta^2})}{\log\log(\frac{d^2}{\delta^2})}\right)$$
queries to $O_f$ and additional 
$$\bigO\left(\frac{d^3}{\delta^2}\left(d + \log^{2.5}\left(\frac{d}{\delta^2}\right)\right)\frac{\log^2(\frac{d^2}{\delta^2})}{\log\log(\frac{d^2}{\delta^2})}\right)$$
1- and 2-qubit gates.
\end{theorem}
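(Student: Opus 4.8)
The plan is to prove \thm{main} by combining the adiabatic approximation from \prop{d_dim_adiabatic}, a localization (sub-Gaussian) estimate for the ground state, and the quantum simulation cost from \thm{spectral-method}, while tracking how each source of error consumes part of the total failure budget. The overall strategy is to run \algo{qhd} with the stated parameters and argue that, upon measuring the simulated final state in the position basis, the outcome $\xx$ lands within distance $\delta$ of $\xx^*$ with probability at least $2/3$. There are three distinct errors to control: (i) the deviation of the true Schr\"odinger evolution $\ket{\Psi^\epsilon(t_f)}$ from the ideal adiabatic follower $\ket{\tilde\Phi(t_f)}$ (i.e.\ the ground state $\ket{\Phi(t_f)}$ up to phase); (ii) the deviation of the simulated state $\ket{\Psi^\epsilon_{\mathrm{sim}}(t_f)}$ from $\ket{\Psi^\epsilon(t_f)}$; and (iii) the probability mass of the idealized ground state $\ket{\Phi(t_f)}$ that lies outside the $\delta$-ball around $\xx^*$. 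I would allocate a constant fraction of the allowed $1/3$ failure probability to each and then solve for the parameters.

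First I would invoke \prop{d_dim_adiabatic} to bound (i): with $H(t)$ as in \eqn{d_Hs} and the specified $\epsilon = \tfrac{1}{18 d\theta\log(\lambda_f)}$ and $t_f = \log(\lambda_f)$, the adiabatic error is $\|\Psi^\epsilon(t_f) - \tilde\Phi(t_f)\| \le \epsilon\theta d t_f = \tfrac{1}{18}$. Next, the simulation step of \algo{qhd} guarantees $\|\Psi^\epsilon_{\mathrm{sim}}(t_f) - \Psi^\epsilon(t_f)\| \le \eta = \tfrac{1}{18}$, handling (ii). By the triangle inequality the simulated state is within $\tfrac{1}{9}$ of the idealized ground state $\ket{\Phi(t_f)}$ (up to global phase). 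Using the elementary fact that an $L^2$-distance of $\mu$ between normalized states implies total-variation distance at most $\mu$ between the induced position-measurement distributions, the measurement outcomes of $\ket{\Psi^\epsilon_{\mathrm{sim}}(t_f)}$ and $\ket{\Phi(t_f)}$ differ in probability by at most a constant of order $1/9$.

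The crux is (iii): I must show the ground state $\ket{\Phi(\lambda_f)}$ of the high-$\lambda$ Hamiltonian concentrates in the $\delta$-ball around $\xx^*$, with failure probability at most some fixed small constant. This is exactly where the promised \lem{subgaussian} (``given a fast-growing potential field $f$, the ground state $\Phi_0(\lambda)$ corresponds to a sub-Gaussian distribution whose variance scales proportional to $1/\lambda$'') enters, and condition 2 of \defn{1}, $c|x-x^*|^4 \le w(x)$, is what drives the sub-Gaussian tail. Separability of $F$ reduces the $d$-dimensional concentration to a product of one-dimensional sub-Gaussian estimates, so I expect a bound of the form $\Pr[\|U\xx - \xx^*\|^2 > \delta^2] \le 3\exp\!\big(-(\text{const})\,\lambda_f\delta^2 + (\text{const})\,d\big)$; since $U$ is orthogonal, $\|U\xx - \xx^*_{\mathrm{sep}}\| = \|\xx - \xx^*\|$, so the rotation is harmless for the distance bound. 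Choosing $\lambda_f = \tfrac{\beta(2d + 6\log 3)}{\delta^2}$ makes this tail at most, say, $1/3$ minus the earlier constants, closing the probability accounting to the required $2/3$ success. I expect establishing this localization bound uniformly and extracting the exact constant $\beta$ to be the main obstacle, since it requires a genuine sub-Gaussian tail estimate for the nonconvex quartic ground state rather than the mere semiclassical Gaussian approximation of \thm{semiclassical}.

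Finally, the complexity follows by feeding the chosen parameters into \thm{spectral-method}. The potential in \eqn{qhd_params} is $V(x,t) = \lambda(t) f(x)$ with $\lambda(t) = e^{2t - t_f}$ ranging up to $\lambda_f$, so I would estimate $\|V\|_{\infty,1} = \int_0^{t_f}\|\lambda(t) f\|_\infty\,\d t = \bigO(\lambda_f)$ (the exponential growth integrates to an $\bigO(\lambda_f)$ factor up to the domain-dependent $\|f\|_\infty$), and note the Lipschitz constant $L$ and domain width $z$ contribute only logarithmically. Substituting $\lambda_f = \Theta(d/\delta^2)$ into the query count $\bigO(\|V\|_{\infty,1}\tfrac{\log(\|V\|_{\infty,1}/\eta)}{\log\log(\|V\|_{\infty,1}/\eta)})$ yields the stated $\bigO\!\big(\tfrac{d^3}{\delta^2}\cdot\tfrac{\log^2(d^2/\delta^2)}{\log\log(d^2/\delta^2)}\big)$ — with an extra factor of $\Theta(d/\delta^2 \cdot ?)$ reconciled by recognizing that the query cost in \thm{spectral-method} is per \emph{simulation}, and the total integration over the interpolation plus the $1/\epsilon$ time-rescaling contributes the remaining polynomial factors in $d$; similarly the gate count picks up the additive $\poly(z) + \log^{2.5}(\cdot) + d\log\log(1/\eta)$ term, producing the stated $\bigO\!\big(\tfrac{d^3}{\delta^2}(d + \log^{2.5}(d/\delta^2))\tfrac{\log^2(d^2/\delta^2)}{\log\log(d^2/\delta^2)}\big)$. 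I would finish by assembling these bounds and confirming the dominant $d^3/\delta^2$ query scaling and $d^4/\delta^2$ gate scaling match the informal statement.
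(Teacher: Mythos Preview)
Your proposal follows essentially the same architecture as the paper's proof: bound the adiabatic error via \prop{d_dim_adiabatic} to get $\|\Psi^\epsilon(t_f)-\tilde\Phi(t_f)\|\le 1/18$, add the simulation error $\eta=1/18$, convert the resulting $1/9$ state error into a probability shift (the paper does this via \lem{robust_state}, giving a factor of $2$, i.e.\ $2/9$), and then bound the ground-state tail by $1/9$ using the sub-Gaussian estimate \lem{subgaussian} together with the concentration inequality of Hsu--Kakade--Zhang (\thm{subgaussian_tail}) with $c=2\log 3$; the three pieces sum to $1/3$.

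Two places in your write-up are looser than the paper and would need to be tightened. First, for the complexity you cannot directly apply \thm{spectral-method} to \eqn{qhd_params} because that equation has both the $\epsilon$ prefactor and a time-dependent coefficient $1/\lambda(t)$ on the Laplacian; the paper performs an explicit change of variables (\lem{time_dilation}) to recast the dynamics as $i\partial_s\Psi=\tfrac12[-\tfrac12\nabla^2+\varphi(s)f]\Psi$ on $s\in[1/(\epsilon\lambda_f),\lambda_f/\epsilon]$, after which $\|V\|_{\infty,1}\le(\lambda_f/\epsilon)\|f\|_\infty$. Your ``$1/\epsilon$ time-rescaling'' remark gestures at this but does not actually produce an equation of the required form. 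Second, you should track that $\|f\|_\infty=\Theta(d)$ on the truncated box (since $f$ is a sum of $d$ one-dimensional wells), which is where the third factor of $d$ in $d^3/\delta^2$ comes from; your estimate $\|V\|_{\infty,1}=\bigO(\lambda_f)$ omits this.
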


\subsection{Initial state preparation}\label{sec:init_state_prep}
An important step in \algo{qhd} is to prepare the ground state of the initial Hamiltonian $H(0)$. Since the function $f$ grows as $\Theta(|x|^4)$, we can choose a large enough constant $M$ such that the quantum dynamics in $\R^d$ can be faithfully simulated over a compact domain $\Omega \coloneqq [-M, M]^d$ (with periodic boundary conditions). We denote $V(\xx)$ as the truncated objective function $f(\xx)$ on $\Omega$. Then, it is sufficient to prepare the ground state of the operator 
\begin{align*}
    H(0) = \lambda_f\left(-\frac{1}{2}\nabla^2\right) + \frac{1}{\lambda_f}V.
\end{align*}

Since we choose $\lambda_f \gg 1$, the potential operator $V/\lambda_f$ can be regarded as a perturbative term in the initial Hamiltonian $H(0)$. Meanwhile, the ground state of the kinetic operator $\left(-\frac{1}{2}\nabla^2\right)$ over the periodic domain $\Omega$ is a constant function that is easily prepared on a quantum computer as a uniform superposition state. Therefore, we may prepare the initial state of $H(0)$ from the ground state of the kinetic operator via a quantum adiabatic evolution.

More concretely, we can simulate a time-dependent Schr\"odinger equation with the following Hamiltonian,
$$\tilde{H}(t) = -\frac{1}{2}\nabla^2 + \frac{a(t)}{\lambda^2_f}V,$$
where $a(t)\colon [0,T]\to [0,1]$ is a monotonically increasing function such that $a(0) = 0$, $a(T) = 1$. Note that the ground state of $\tilde{H}(0)$ is a constant function and that of $\Tilde{H}(T)$ is the desired initial state in QHD. To ensure the quantum state stays in the ground-energy subspace, the evolution time $T$ usually depends on a polynomial of $1/\Delta$, where $\Delta$ is the minimal spectral gap of $\tilde{H}(t)$ for $0 \le t \le T$. For any $t \in [0, T]$, the potential term $\frac{a(t)}{\lambda^2_f}V$ is perturbative so we can derive a constant lower bound for the spectral gap using the standard perturbation theory~\cite{hislop2012introduction}. This means the adiabatic evolution time $T$ is independent of the problem dimension $d$. 

The query complexity of this quantum evolution is typically proportional to the simulation time $T$ and the $L^\infty$-norm of the perturbation $V/\lambda^2_f$. As $V$ is the sum of $d$ identity 1D model problems $w(x) \sim |x-x^*|^4$, we have $\|V\|_{\infty} \sim dM^4$. Together with $\lambda_f \sim d$, we have that $\|V\|_{\infty}/\lambda^2_f \sim M^4/d$, where $M$ is a large constant. It turns out that the initial state of $H(0)$ is prepared with $\bigO(1/d)$ queries to $O_f$ and an additional $\bigO(d)$ 1- and 2-qubit gates. Both the query complexity and the gate complexity are significantly lower than the upper bounds claimed in \thm{main}.

Also, we would like to remark that using the ground state of the kinetic operator $\left(-\frac{1}{2}\nabla^2\right)$ (i.e., a constant function) does not appear to harm the overall performance of QHD. In \sec{numerical_qhd}, we conduct an experiment of QHD using the uniform superposition state as the initial state. Numerical results suggest the same convergence rate as predicted by our theoretical argument.

\subsection{Proof of \thm{main}}
Our proof of \thm{main} relies on the following lemma on the ground state of the Schr\"odinger operator $\hat{H}(\lambda)$. This lemma gives an estimate on the tail probability of the ground state. In particular, we show that the ground state of $\hat{H}(\lambda)$ describes a quantum particle whose position is a sub-Gaussian random variable. The complete proof of \lem{subgaussian} can be found in \append{proof_subgaussian}.

\begin{lemma}[Sub-Gaussian ground state]\label{lem:subgaussian}
    Let $w$ be a well-formed asymmetric double well. Let $\phi_\lambda(x)$ be the ground state of the Hamiltonian 
    $$\hat{H}(\lambda) = \frac{1}{\lambda}\left(-\frac{1}{2}\nabla^2\right) + \lambda w(x).$$
    Then, for any $s \in \R$, we have
    \begin{align}
    \mathbb{E}_{X\sim|\phi_{\lambda}|^2}\left[e^{s(X-x^*)}\right] \le e^{s^2\sigma^2/2},
    \end{align}
    where $\sigma^2 = \beta/\lambda$, and $\beta$ is a constant that only depends on $w$.
\end{lemma}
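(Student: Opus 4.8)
The plan is to work directly with the positive, real, normalized ground state $\phi_\lambda$ and to reduce the moment-generating-function bound to a $\lambda$-uniform statement after rescaling. Since $\hat H(\lambda)$ has a non-degenerate lowest eigenvalue, $\phi_\lambda$ may be taken strictly positive; after the harmless shift $x^*=0$ I would introduce the rescaled coordinate $y=\sqrt{\lambda}\,x$. Under this change of variables the operator becomes $-\tfrac12\partial_y^2 + U_\lambda(y)$ with $U_\lambda(y)=\lambda\,w(y/\sqrt{\lambda})$, and the target transforms as $\mathbb{E}_{X\sim|\phi_\lambda|^2}[e^{sX}] = \mathbb{E}_{Y\sim|\tilde\phi_\lambda|^2}[e^{(s/\sqrt{\lambda})Y}]$. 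Hence it suffices to prove one sub-Gaussian bound $\mathbb{E}[e^{uY}]\le e^{u^2\beta/2}$ uniform in $\lambda$, and then set $u=s/\sqrt{\lambda}$ to obtain the claimed $\sigma^2=\beta/\lambda$. The point of the rescaling is that condition 2 of \defn{1} together with $w''(x^*)\ge\gamma^2$ makes $U_\lambda$ uniformly confining: near the origin it has the fixed, $\lambda$-independent quadratic profile $U_\lambda(y)\approx\tfrac12\gamma^2 y^2$, while the global quartic lower bound $c|x|^4\le w$ prevents mass from leaking to large $|y|$ as $\lambda$ grows.

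Second, I would establish the uniform sub-Gaussian bound on $\tilde\phi_\lambda$ in two regimes. In the bulk the semiclassical comparison (\thm{semiclassical}) identifies $\tilde\phi_\lambda$ with the fixed harmonic ground state $\propto e^{-\gamma y^2/2}$ up to $O(\lambda^{-1/2})$ in $L^2$, which pins down the correct variance proxy $\beta$; but $L^2$-closeness says nothing about exponential moments, so the tails must be controlled separately. For the tails I would run an Agmon-type barrier (supersolution) argument on the eigen-ODE $\tilde\phi_\lambda'' = 2(U_\lambda - E_0)\tilde\phi_\lambda$: outside a fixed ball the coefficient $U_\lambda-E_0$ is positive and grows, so a Gaussian (indeed faster-than-Gaussian) comparison function is a supersolution, and the maximum principle forces $\tilde\phi_\lambda$ to decay at least that fast uniformly in $\lambda$. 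The quartic confinement from condition 2 is exactly what guarantees this pointwise tail domination, and it is the ingredient that \thm{semiclassical} cannot supply.

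Third, I would convert these estimates into the clean inequality with prefactor one. The most transparent route is the ground-state transform: $\rho=\tilde\phi_\lambda^2$ is the reversible invariant measure of the diffusion with generator $\tfrac12\partial_y^2 + (\log\tilde\phi_\lambda)'\partial_y$, and a log-Sobolev inequality for $\rho$ yields, by Herbst's argument applied to the $1$-Lipschitz map $y\mapsto y$, the bound $\mathbb{E}_\rho[e^{u(Y-\mathbb{E}Y)}]\le e^{C_{\mathrm{LS}}u^2/2}$. I would obtain the log-Sobolev constant $C_{\mathrm{LS}}=\beta$ by a Holley--Stroock perturbation of a uniformly log-concave reference measure in the bulk, bounding the oscillation of the relative density via the bulk comparison of the previous step, and by using the super-Gaussian tail decay to handle the far region. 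Setting $u=s/\sqrt{\lambda}$ then gives the stated $e^{s^2\beta/(2\lambda)}$.

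The main obstacle, and the step I expect to cost the most work, is making the sub-Gaussian constant uniform in $\lambda$ despite the nonconvexity of the double well: because $w$ is genuinely nonconvex, $\rho=\tilde\phi_\lambda^2$ need not be globally log-concave, so Bakry--Émery does not apply off the shelf and the Holley--Stroock/Agmon splitting is genuinely needed. A secondary but real subtlety is the centering: Herbst's inequality is naturally centered at the mean $\mathbb{E}_\rho[Y]$, whereas the statement is centered at the minimizer $x^*$. For an asymmetric well these differ, with $\mathbb{E}X - x^* = O(1/\lambda)$, so strictly one must either absorb this shift into the constant $\beta$ in the large-$\lambda$ regime relevant to \thm{main} or restate the inequality about the mean; I would make this reconciliation explicit rather than leave it implicit.
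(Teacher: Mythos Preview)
Your plan is workable in spirit but takes a much longer path than the paper's, and several of the difficulties you anticipate are artifacts of that detour. The paper neither rescales, nor invokes \thm{semiclassical}, nor passes through a log-Sobolev/Herbst argument. It simply observes that $\phi_\lambda$ is also the ground state of $2\lambda\hat H(\lambda)=-\nabla^2+2\lambda^2 w$ and cites an Agmon decay estimate in the original coordinates: $\int e^{d_0(x)}|\phi_\lambda(x)|^2\,dx\le\alpha$ for a constant $\alpha$ depending only on $w$. The key point you underexploit is that the quartic lower bound $w(x)\ge c|x-x^*|^4$ from condition~2 forces the Agmon distance for the potential $2\lambda^2 w$ to satisfy $d_0(x)\gtrsim \lambda(x-x^*)^2$. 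Thus the Agmon estimate is already a $\psi_2$-bound $\mathbb{E}\bigl[e^{c''\lambda(X-x^*)^2}\bigr]\le\alpha$, centered at $x^*$ and with the correct $1/\lambda$ scaling built in. Two elementary lemmas (rescale $\alpha$ down to $2$, then convert $\psi_2$ to MGF by a moment expansion) finish the job.

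This shortcut dissolves both obstacles you flag. The centering issue never arises, because the Agmon metric is anchored at the classically allowed region, which for a potential bounded below by $c|x-x^*|^4$ collapses to $x^*$ itself; there is no mean to reconcile. And the Holley--Stroock step---which, as you correctly note, is delicate since the log relative density against a Gaussian reference has unbounded oscillation in the tails---is simply unnecessary: a $\psi_2$-bound is the strongest sub-Gaussian condition and yields the MGF inequality directly, with no LSI in sight. Your Agmon barrier argument in the second paragraph is in fact the whole proof; the rescaling and the functional-inequality machinery around it can be dropped.
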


Also, to leverage the quantum simulation algorithm in \thm{spectral-method}, we need to rescale the time in the Schr\"odinger equation \eqn{qhd_params}. In the following lemma, we give the rescaled quantum dynamics whose final state is precisely the same as in \eqn{qhd_params}. The proof of \lem{time_dilation} is given in \append{technical}.

\begin{lemma}[Time rescaling]\label{lem:time_dilation}
    The final state $\ket{\Psi^\epsilon(t_f)}$ of the quantum dynamics in \algo{qhd} can be obtained by simulating the following Schr\"odinger equation for $\frac{1}{\epsilon \lambda_f} \le s \le \frac{\lambda_f}{\epsilon}$,
    \begin{align}\label{eqn:effective_dynamics}
        i \frac{\d}{\d s}\ket{\Psi(s)} = \frac{1}{2}\left[-\frac{1}{2}\nabla^2 + \varphi(s) f\right]\ket{\Psi(s)},
    \end{align}
where
    \begin{align}\label{eqn:varphi}
        \varphi(s) \coloneqq \frac{1}{\left(-\epsilon s + \lambda_f + 1/\lambda_f\right)^2} \in [1/\lambda^2_f, \lambda^2_f].
    \end{align}    
\end{lemma}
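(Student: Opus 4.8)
The plan is to obtain \eqn{effective_dynamics} from the QHD dynamics \eqn{qhd_params} by a deterministic reparametrization of the time variable, exploiting the fact that reparametrizing the independent variable of a first-order Schr\"odinger ODE leaves the trajectory of states unchanged and merely relabels the clock. First I would divide \eqn{qhd_params} by $\epsilon$ to write it as $i\frac{\d}{\d t}\ket{\Psi^\epsilon(t)} = \frac{1}{\epsilon}\left[\frac{1}{\lambda(t)}\left(-\frac{1}{2}\nabla^2\right) + \lambda(t)f\right]\ket{\Psi^\epsilon(t)}$, in which the kinetic coefficient $\frac{1}{\epsilon\lambda(t)}$ is time-dependent. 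The key idea is to choose a new time variable $s=s(t)$ so that, after the chain rule, the kinetic term acquires the constant coefficient $\frac{1}{2}$ demanded by \eqn{effective_dynamics}.

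Concretely, I would impose $\frac{\d s}{\d t} = \frac{2}{\epsilon\lambda(t)}$. Setting $\ket{\Psi(s)} \coloneqq \ket{\Psi^\epsilon(t(s))}$ and applying $\frac{\d}{\d t} = \frac{\d s}{\d t}\frac{\d}{\d s}$, then dividing through by $\frac{\d s}{\d t}$, turns the kinetic coefficient into exactly $\frac{1}{2}$, while the potential coefficient becomes $\frac{\lambda(t)}{\epsilon}\big/\frac{\d s}{\d t} = \frac{\lambda(t)^2}{2}$. This yields $i\frac{\d}{\d s}\ket{\Psi(s)} = \frac{1}{2}\left[-\frac{1}{2}\nabla^2 + \lambda(t(s))^2 f\right]\ket{\Psi(s)}$, so it remains only to identify $\varphi(s) = \lambda(t(s))^2$ with the claimed closed form and to verify the limits of integration.

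For the explicit formula, recall $\lambda(t) = e^{2t - t_f} = e^{2t}/\lambda_f$ with $t_f = \log\lambda_f$, so $\frac{\d s}{\d t} = \frac{2\lambda_f}{\epsilon}e^{-2t}$; integrating gives $s(t) = -\frac{\lambda_f}{\epsilon}e^{-2t} + C$. I would fix the integration constant $C = (\lambda_f + 1/\lambda_f)/\epsilon$ by demanding $s(0) = \frac{1}{\epsilon\lambda_f}$, and then check $s(t_f) = \frac{\lambda_f}{\epsilon}$, recovering the stated range $\frac{1}{\epsilon\lambda_f} \le s \le \frac{\lambda_f}{\epsilon}$. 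With this constant one has $-\epsilon s + \lambda_f + 1/\lambda_f = \lambda_f e^{-2t}$, hence $\lambda(t)^2 = e^{4t}/\lambda_f^2 = 1/(\lambda_f e^{-2t})^2 = \varphi(s)$, matching \eqn{varphi} exactly; the range $\varphi(s) \in [1/\lambda_f^2, \lambda_f^2]$ then follows from $t \in [0, t_f]$.

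Finally, because $\frac{\d s}{\d t} > 0$ the map $t \mapsto s(t)$ is a strictly increasing bijection of $[0, t_f]$ onto $[\frac{1}{\epsilon\lambda_f}, \frac{\lambda_f}{\epsilon}]$, so the reparametrization is legitimate and $\ket{\Psi(s)}$ traces the same family of states as $\ket{\Psi^\epsilon(t)}$; in particular $\ket{\Psi(\lambda_f/\epsilon)} = \ket{\Psi^\epsilon(t_f)}$, which is the assertion of \lem{time_dilation}. I do not anticipate a genuine obstacle here, since the whole argument is a change of variables; the only thing requiring care is the bookkeeping that pins down $C$ and confirms that the transformed potential coefficient collapses to precisely $\frac{1}{2}\varphi(s)$ rather than merely some constant multiple of it.
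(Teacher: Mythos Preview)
Your proof is correct and is essentially the same change of variables as the paper's. The only cosmetic difference is that the paper performs the reparametrization in two steps---first introducing $\xi = \lambda_f + 1/\lambda_f - 1/\lambda(t)$ and then rescaling $s = \xi/\epsilon$---whereas you go directly from $t$ to $s$ by solving $\frac{\d s}{\d t} = \frac{2}{\epsilon\lambda(t)}$; the resulting $s(t)$ and $\varphi(s)$ are identical.
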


\vspace{4mm}
Now, we are ready to prove \thm{main}. We define $B_\delta\coloneqq \{\xx\in\R^d:\|\xx - \xx^*\| \le \delta\}$, i.e., the $\delta$-ball entered at $\xx^*$. Let $1_{B_\delta}(\xx)$ be the indicator function of the set $B_{\delta}$, i.e., 
\begin{align}
    1_{B_\delta}(\xx) = \begin{cases}
    1 & (\|\xx - \xx^*\| \le \delta)\\
    0 & (\mathrm{otherwise}).
    \end{cases}
\end{align}
Then, for a quantum particle described by a quantum state $\ket{\Psi}$, its position is a random variable $X\in \R^d$ such that $X \sim |\Psi|^2$. The success probability can be evaluated through the expectation,
\begin{align}
    \Pr_{X\sim |\Psi|^2}[\|X - \xx^*\| \le \delta] = \bra{\Psi}1_{B_\delta}(\xx)\ket{\Psi} = \int_{\R^d} 1_{B_\delta}(x) |\Psi(\xx)|^2~\d \xx.
\end{align}

\begin{proof}
Let $\ket{\Phi(t_f)}$ be the ground state of the final Hamiltonian $H(t_f) = \frac{1}{\lambda_f}\left(-\frac{1}{2}\nabla^2\right) + \lambda_f f$. The modulus square of the quantum state, $|\Phi(t_f)|^2$, gives a probability density on $\R^d$. We denote $X \in \R^d$ as a random vector following the distribution $|\Phi(t_f)|^2$. Due to our construction of the optimization instances, it is clear that $Y \coloneqq U^\top(X - \xx^*)$ is a $d$-dimensional random vector with independent coordinates. Moreover, by \lem{subgaussian},
\begin{align}
    \mathbb{E}\left[e^{r^\top Y}\right] \le e^{\|r\|^2\sigma^2/2}
\end{align}
for all $r \in \R^d$, where $\sigma^2 = \beta/\lambda$ and $\beta$ is a constant that only depends on $w$. Then, by \thm{subgaussian_tail}, we have that
\begin{align}\label{eqn:concentration_estimate}
    \Pr\left[\|Y\|^2 > d \sigma^2 + 2\sqrt{cd\sigma^4} + 2 \sigma^2 c\right] \le e^{-c}.
\end{align}
We choose $c = 2\log(3)$ (so $e^{-c} = 1/9$) so that
\begin{align*}
    d \sigma^2 + 2\sqrt{cd\sigma^4} + 2 \sigma^2 c \le \frac{\beta(2d + 6\log(3))}{\lambda_f}.
\end{align*}
If specify the parameter $\lambda_f$ as in \thm{main}, it follows that
\begin{align}\label{eqn:res1}
     \Pr[\|X_1 - \xx^*\| > \delta] = \Pr[\|Y\| > \delta] \le \frac{1}{9}.
\end{align}

Let $\ket{\Psi^\epsilon(t_f)}$ be the final state of the dynamics~\eqn{qhd_params}. If we choose $\epsilon$ as in \thm{main}, by \prop{d_dim_adiabatic}, we have that
\begin{align}\label{eqn:res2}
    \|\Psi^\epsilon(t_f) - \Tilde{\Phi}(t_f)\| \le \frac{1}{18},
\end{align}
where $\tilde{\Phi}(t_f)$ only differs from the ground state $\ket{\Phi(t_f)}$ by a global phase so $|\Phi(t_f)|^2$ and $|\tilde{\Phi}(t_f)|^2$ give exactly the same probability distribution.

Now, if we simulate the quantum dynamics in \algo{qhd} up to an additive error $\eta = 1/18$, by the triangle inequality and \eqn{res2}, we have that
\begin{align}
    \|\Psi^\epsilon_{\mathrm{sim}}(t_f) - \Tilde{\Phi}(t_f)\| \le \frac{1}{9}.
\end{align}
Then, we deduce from \lem{robust_state} that
\begin{align}\label{eqn:res3}
    \left|\bra{\Psi^\epsilon_{\mathrm{sim}}(t_f)}1_{B_\delta}\ket{\Psi^\epsilon_{\mathrm{sim}}(t_f)} - \bra{\tilde{\Phi}(t_f)}1_{B_\delta}\ket{\tilde{\Phi}(t_f)}\right| \le \frac{2}{9}.
\end{align}
Since $\tilde{\Phi}(t_f)$ describes the same probability distribution as $\Phi(t_f)$, we combine \eqn{res1} and \eqn{res3} to yield the desired estimate,
\begin{align}
    \Pr_{X\sim |\Psi^\epsilon_{\mathrm{sim}}(t_f)|^2}\left[\|X - \xx^*\| > \delta \right] \le \frac{1}{3}.
\end{align}

It remains to figure out the overall complexity of the quantum simulation. Since the function $f$ grows as $\Theta(|x|^4)$, we can choose a large enough constant $M$ such that the quantum dynamics in $\R^d$ can be faithfully simulated over a compact domain $\Omega \coloneqq [-M, M]^d$ (with periodic boundary conditions). In what follows, we let $V(\xx)$ be the truncated objective function $f(\xx)$ on $\Omega$.

The initial state of the quantum simulation can be prepared using the procedures in \sec{init_state_prep}. The query/gate complexity of the initial state preparation is at most $\bigO(d)$.

By \lem{time_dilation}, simulating the quantum simulation in \algo{qhd} is equivalent to simulating the effective quantum dynamics~\eqn{effective_dynamics}. We use the quantum algorithm in \thm{spectral-method} to simulate~\eqn{effective_dynamics}. For $s \in J$, the time-dependent potential is $V(s,x) = \varphi(s) V(x)$, so we have
\begin{align}
    \|V\|_{\infty, 1} = \left(\int^{\lambda_f/\epsilon}_{1/\epsilon\lambda_f} \varphi(s)~\d x\right)\|V\|_{\infty} \le \frac{\lambda_f}{\epsilon}\|V\|_{\infty} = 27\theta\|V\|_{\infty}d \lambda_f \log(\lambda_f) \le \bigO\left(\frac{d^3}{\delta^2}\right).
\end{align}
Note that $\|V\|_{\infty} \le Cd M^4$ for an absolute constant $C> 0$. Similarly, we can prove that the time-dependent potential function $V(x,s)$ is Lipschitz in $s$ and the Lipschitz constant is
\begin{align}
    L = \max_{s\in J} \|\dot{V}\|_{\infty} \le 2 \frac{\lambda^3_f}{27d\theta \lambda(\lambda_f)}\|V\|_\infty \le \bigO\left(\frac{d^3}{\delta^6}\right).
\end{align}
Therefore, by \thm{spectral-method}, the quantum simulation task can be implemented with
\begin{align}
    \bigO\left(\frac{d^3}{\delta^2}\cdot\frac{\log^2(\frac{d^3}{\delta^2})}{\log\log(\frac{d^3}{\delta^2})}\right)
\end{align}
queries to the quantum evaluation oracle $O_f$ and additional
\begin{align}
    \bigO\left(\frac{d^3}{\delta^2}\left(d + \log^{2.5}\left(\frac{d^3}{\delta^4}\right)\right)\frac{\log^2(\frac{d^3}{\delta^2})}{\log\log(\frac{d^3}{\delta^2})}\right)
\end{align}
1- and 2-qubit gates.
\end{proof}

\subsection{Numerical simulation for 1D QHD}\label{sec:numerical_qhd}
As complementary to the theoretical result, we numerically simulate QHD for the 1D model problem $f(x) = x^4 - (x-1/32)^2 - c$ (where $c \approx -0.296$ so $f(x) \ge 0$). By \eqn{concentration_estimate}, the tail estimate of the final ground state $\ket{\Phi(t_f)}$ is exponentially small in the parameter $c$,
\begin{align}
    \Pr_{X\sim |\Phi(t_f)|^2}\left[\|X - x^*\| > \frac{\beta(2+3c)}{\lambda_f}\right] \le e^{-c}.
\end{align}
If we choose $c = \frac{1}{3}\left(\frac{\delta \lambda_f}{\beta} - 2\right)$ for a fixed $\delta > 0$, the failure probability can be expressed as a function of $\lambda_f$,
\begin{align}\label{eqn:failure}
     \Pr_{X\sim |\Phi(t_f)|^2}\left[\|X - x^*\| > \delta\right] \le \bigO\left(e^{-\frac{\delta}{3\beta} \lambda_f}\right).
\end{align}
Meanwhile, \thm{main} suggests that we need to maintain $\epsilon \sim 1/\log(\lambda_f)$ to make sure that the final state in the quantum simulation is $\epsilon$-close to the ground state $\ket{\Phi(t_f)}$. 

In our numerical experiment, we choose $\epsilon = \frac{1}{10\log(\lambda_f)}$ and simulate the Schr\"odinger equation \eqn{qhd_params} for $0 \le t \le t_f = \log(\lambda_f)$.
We use the uniform superposition state as the initial state $\ket{\Phi(0)}$. 
When $t$ reaches the final time $t = t_f$, we stop the quantum simulation and obtain the final state $\ket{\Psi^\epsilon_{\mathrm{sim}}(t_f)}$. For some $\lambda_f$, we compute the failure probability by
\begin{align*}
    \Pr_{X\sim |\Psi^\epsilon_{\mathrm{sim}}(t_f)|^2}[\|X-x^*\| > \delta] = 1 - \bra{\Psi^\epsilon_{\mathrm{sim}}(t_f)}1_{B_{\delta}}(x) \ket{\Psi^\epsilon_{\mathrm{sim}}(t_f)},
\end{align*}
where we fix $\delta = 0.1$. As shown in \eqn{failure}, the failure probability should decay exponentially as the parameter $\lambda_f$ increases.

\begin{figure}[htbp]
    \centering
    \includegraphics[width=8cm]{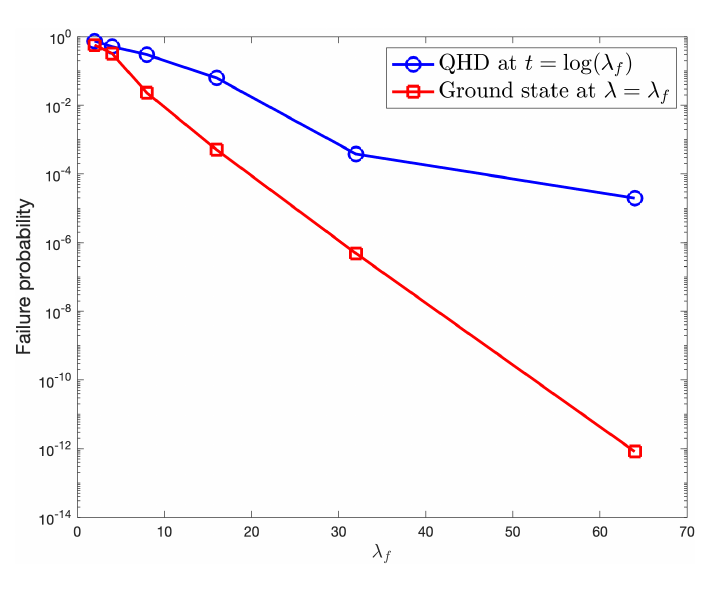}
    \caption{\small Numerical experiment for 1D QHD.}
    \label{fig:1d_qhd}
\end{figure}

In \fig{1d_qhd}, we show the failure probability of QHD as a function of $\lambda_f$, see the blue curve. As a reference, we also plot the failure probability suggested by the ground state of $H(t_f)$, see the red curve. 
As predicted by our theoretical argument, both curves decrease exponentially fast with respect to the parameter $\lambda_f$. 
Nevertheless, a significant gap between these two curves shows that the failure probability of QHD is higher than the exact final ground state. This might be a consequence of the adiabatic approximation error and the inexact choice of the initial state.

\section{Empirical study of classical algorithms}\label{sec:empirical}

In this section, we perform an empirical study to evaluate the performance of classical optimization algorithms for nonconvex optimization instances. We consider 6 state-of-the-art classical optimization algorithms (covering four major categories in the optimization literature):

\begin{enumerate}
    \item Annealing-based and Monte-Carlo algorithms:
    \begin{itemize}
        \item Dual annealing~\cite{xiang1997generalized}, which combines the generalization of CSA (Classical Simulated Annealing) and FSA (Fast Simulated Annealing);
        \item Basin-hopping~\cite{wales1997global}, which conceptually transforms the objective function into a collection of interpenetrating staircases (basins) by local optimization methods, and then explores these basins by a canonical Monte Carlo simulation that perturbs current coordinates and randomly accepts or rejects the new coordinates based on the corresponding basin value.
    \end{itemize}
    \item Gradient method: Canonical gradient descent with Gaussian noise added to the current coordinates in each step. This method is called perturbed gradient descent or stochastic gradient descent (SGD) in the literature. We stick to the latter hereafter.
    \item Newton and quasi-Newton methods:
    \begin{itemize}
        \item Interior-point method (Ipopt~\cite{wachter2006implementation});
        \item Sequential quadratic programming (SQP)~\cite{nocedal1999numerical}.
    \end{itemize}
    \item Branch-and-bound algorithm: Gurobi~\cite{gurobi}, which is a popular industrial-level optimization solver.
\end{enumerate}

\subsection{Methodology}\label{sec:methodology}

\paragraph{Problem Instances.}
The problem instances are introduced in ~\sec{nd}.
For a given positive integer $d$ and a $d$-by-$d$ orthogonal matrix $U$, we define a problem instance:
\begin{align*}
    \text{minimize} & \quad F_U(\vect{x}),
\end{align*}
where $\xx \in \R^d$, $F_U(\vect{x}) \coloneqq F(U\vect{x})$ and $F(\vect{x}) = F(x_1,\dots,x_d) \coloneqq \sum^d_{k=1}w(x_k)$. Here, we choose $w(x) = x^4 - \left( x - \frac{1}{32} \right)^2 - c$ with $c := \min_{x \in \R} \left\{ x^4 - \left( x - \frac{1}{32} \right)^2\right\} \approx -0.296 $.
The constant $c$ is introduced such that $\min_{\xx}F_U(\vect{x}) = 0$.

\paragraph{Time-to-solution.}
\emph{Time-to-solution (TTS)} is a standard performance metric for randomized optimization algorithms~\cite{ronnow2014defining}.
Suppose that a randomized algorithm solves a problem instance with success probability $p$ and average runtime $t$ (over its internal randomness), its TTS with success probability $p_0$ is defined as
\begin{align}\label{eqn:tts}
    \ttspar{p_0} := Rt,\quad \text{where}\ R := \left\lceil \frac{\ln(1 - p_0)}{\ln(1-p)} \right\rceil.
\end{align}
Intuitively, $\ttspar{p_0}$ is the expected time spent for $R$ consecutive runs of the algorithm on a problem instance to guarantee that the overall success probability is at least $p_0$.
In the literature, TTS is often evaluated for a fixed problem instance. In our case, however, the optimization instances are also randomly generated. To reflect the average performance of algorithms when applied to a class of randomly generated optimization instances, we introduce the notion of \emph{average TTS}. Suppose that $\mathcal{A}(\theta)$ is an algorithm (parameterized by some adjustable parameter $\theta$) and $f$ are problem instances randomly drawn from a set $\mathscr{C}$. We define the success probability $p(\theta)$ and average runtime $t(\theta)$ as follows:
\begin{align}
    p(\theta) &\coloneqq \Pr[\text{$\mathcal{A}(\theta)$ solves a problem instance $f$}],\\
    t(\theta) &\coloneqq \mathbb{E}[\text{the time required by $\mathcal{A}(\theta)$ to halt}].
\end{align}
Similar to the standard definition, we define the average TTS of $\mathcal{A}(\theta)$:
\begin{align}
    \ttspar{p_0}(\theta) \coloneqq \left\lceil\frac{\ln(1 - p_0)}{\ln(1-p(\theta))} \right\rceil \cdot t(\theta).
\end{align}
In practice, $p(\theta)$ can be estimated by counting the successful events out of a large number of trials, each with a different optimization instance. The average runtime $t$ will be estimated similarly.
In what follows, for simplicity, we will refer to ``average TTS'' as TTS if not stated otherwise.

\paragraph{TTS for Algorithms with Adjustable Parameters.}
For an optimization algorithm $\mathcal{A}(\theta)$, we can fine-tune the adjustable parameter $\theta$ (e.g., number of iterations, learning rate, etc.) to yield better performance. It is therefore challenging to justify which adjustable parameter $\theta$ we should choose in the estimation of TTS.
To address this issue, we adopt the same approach proposed by R{\o}nnow et al.~\cite{ronnow2014defining} in their study comparing quantum and classical algorithms.
We regard an algorithm parameterized by $\theta$ as an algorithm family and define its TTS (for a given dimensionality $d$) by the minimal TTS in the family over all possible $\theta$.
Admittedly, we can only try a finite number of $\theta$ in reality.
Therefore, we pick the parameter range such that it is more likely to cover the optimal choice(s).

\paragraph{Parameter setup.}
We conduct all of our experiments below on the Zaratan cluster~\cite{zaratan} if not stated otherwise.
For a fixed dimensionality $d$ and parameter $\theta$, we run an optimization algorithm multiple times on instances $F_U$ with Haar-random $U$.
We regard an algorithm successfully solves a problem instance ``minimize $F_U(\xx)$'' if it returns a solution $\vect{y}$ such that $F_U(\vect{y}) \le C/2$, where $C\approx 0.088$ is the (globally) second lowest local minima of $F_U(\xx)$. The success probability (i.e., $p$) of an algorithm $\mathcal{A}(\theta)$ is evaluated as the fraction of successful events out of a large number of tested instances drawn from $\mathcal{C}(w,d)$.
We will not report corresponding TTS if less than 5 success runs are detected since in this case the corresponding estimation of $p$ is likely unreliable.
Each run is single-threaded and assigned one CPU core.
Runtime for estimating $t$ is measured by CPU core time used but not wall time elapsed.
We choose the overall success probability threshold $p_0 = 0.99$.

\subsection{Annealing-based and Monte-Carlo algorithms} \label{sec:dabh}
We tested two stochastic optimization algorithms, dual annealing~\cite{xiang1997generalized} and basin-hopping~\cite{wales1997global}, using their SciPy~\cite{2020SciPy-NMeth} implementation in Python.
An important adjustable parameter $\theta$ of these algorithms is the maximal iteration number. The range of $\theta$, together with other important problem parameters, is presented in the following table.

\begin{align*}
    \begin{array}{c|ccc}
        & \text{Range of }d & \text{Range of }\theta & \text{Number of Runs}\footnotemark  \\
        \hline
        \text{Dual Annealing} & \{1,2,\dots,15\} & \{2^6,2^7,\dots,2^{14}\} & 10000 \\
        \text{Basin-Hopping} & \{1,2,\dots,15\} & \{2^3,2^4,\dots,2^{11}\} & 10000
    \end{array}
\end{align*}
\footnotetext{The number of runs for \emph{each} possible combination of experiment parameters. \label{ftn:nor}}

\fig{dabh} shows the scaling of two algorithms.
Each curve in a plot represents the TTS (parameterized by a specific $\theta$) as a function of the dimension$d$.
Note that the ``true'' TTS of an algorithm is the lower envelope of all the TTS curves, each parameterized by a different $\theta$.
It is clear that TTS for both algorithms scale exponentially in dimensionality $d$.

\begin{figure}[htbp]
\centering
\begin{subfigure}{.5\textwidth}
  \centering
  \includegraphics[width=\textwidth]{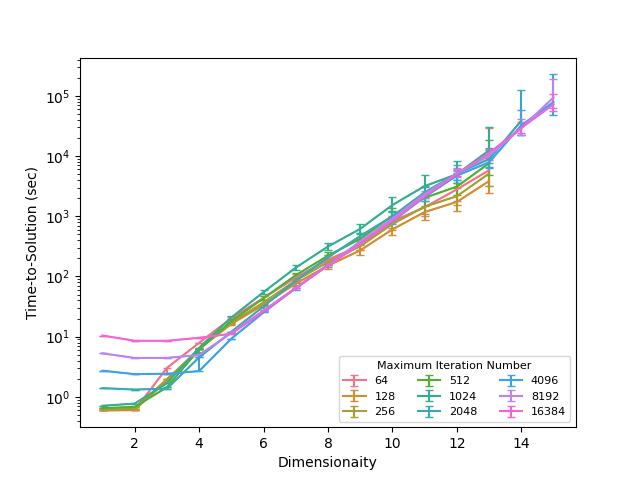}
  \caption{Dual annealing}
\end{subfigure}%
\begin{subfigure}{.5\textwidth}
  \centering
  \includegraphics[width=\textwidth]{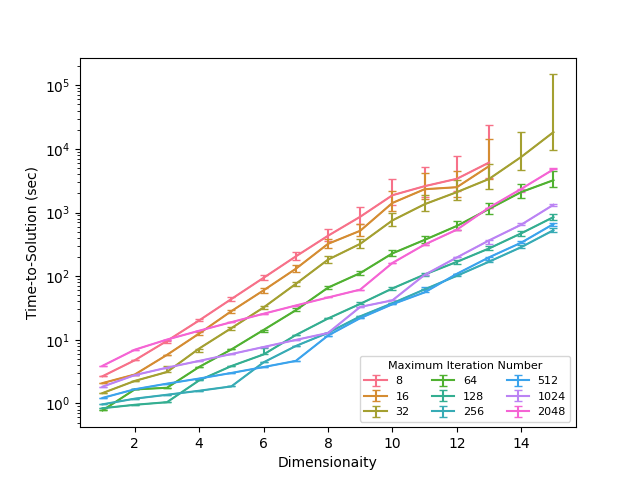}
  \caption{Basin-hopping}
\end{subfigure}
\caption{\small TTS scaling of dual annealing and basin-hopping. Confidence intervals are calculated at the 95\% level. Results suggest that TTS for both algorithms (lower envelopes of curves) scale exponentially in dimensionality $d$.}
\label{fig:dabh}
\end{figure}

\subsection{Newton and quasi-Newton methods}
We also tested two optimization algorithms based on Newton and quasi-Newton methods, Ipopt~\cite{wachter2006implementation} and SQP (sequential quadratic programming)~\cite{nocedal1999numerical} in its SciPy~\cite{2020SciPy-NMeth} implementation.
The backend linear solver for Ipopt is MA27 from Coin-HSL Archive.
The Ipopt experiment was conducted on a consumer laptop (Intel Core i7-8750H CPU 2.20GHz) due to compatibility issues. We do not find an adjustable parameter in these algorithms. Some important experiment parameters are listed below.

\begin{align*}
    \begin{array}{c|cc}
        & \text{Range of }d & \text{Number of Runs}\footref{ftn:nor} \\
        \hline
        \text{Ipopt} & \{1,2,\dots,10\}  & 12000 \\
        \text{SQP} & \{1,2,\dots,15\}  & 100000
    \end{array}
\end{align*}
\fig{local} shows the scaling of the TTS of the two algorithms.
Both scale exponentially in dimensionality $d$. 

\begin{figure}[htbp]
\centering
\begin{subfigure}{.5\textwidth}
  \centering
  \includegraphics[width=\textwidth]{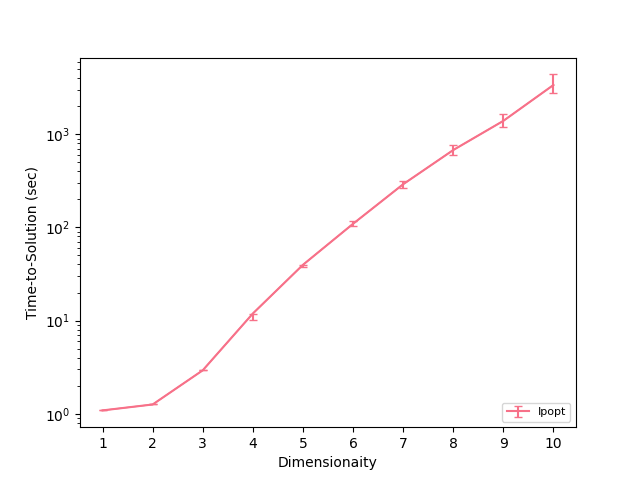}
  \caption{Ipopt}
\end{subfigure}%
\begin{subfigure}{.5\textwidth}
  \centering
  \includegraphics[width=\textwidth]{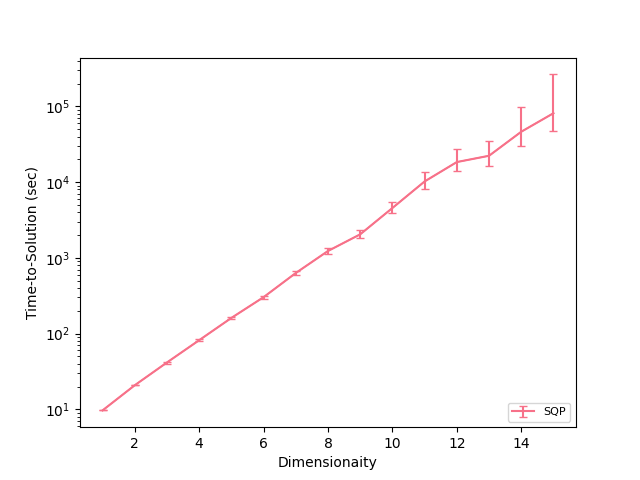}
  \caption{SQP}
\end{subfigure}
\caption{\small TTS scaling of Ipopt and SQP. Confidence intervals are calculated at the 95\% level. Both scale exponentially in dimensionality $d$.}
\label{fig:local}
\end{figure}

\subsection{Stochastic Gradient Descent}
Note that QHD is closely related to classical gradient descent~\cite{leng2023quantum}.
It is thus natural to ask how gradient descent and its variants perform in our instances.
Nevertheless, the fact that $F_U(\xx)$ has $2^d$ local minima immediately rules out the possibility that deterministic gradient-based algorithms have good performance.
Therefore, we focus on stochastic gradient descent (SGD) in this section.

When applying SGD, it is crucial to choose a correct learning rate schedule as it directly determines the performance of SGD.
Although we have some crude empirical rules (e.g., the learning rate should be decaying with time), there is little prior knowledge that we can leverage to choose good learning rate schedules for our instances.
As a result, we choose two types of schedules of particular interest: (1) constant learning rate; and (2) the learning rate schedules that correspond to the time-dependent functions in QHD.
Below we will explain how one can connect the learning rate schedule in SGD with the time-dependent functions in QHD.

{\LinesNumberedHidden
\begin{algorithm}[htbp]
    \caption{Stochastic Gradient Descent}
    \label{algo:sgd}
    \begin{tabular}{lll}
        \textbf{Input:} & $F_U$ & minimization objective \\
        & $T$ & total evolution time \\
        & $s: [0,T] \to \mathbb{R}_+$ & effective learning rate schedule \\
        & $s_{\rm max}$ & maximal learning rate \\
        & $k \in \mathbb{Z}_+ \cup \{\infty\}$ & maximal number of rounds with no smaller objective observed
    \end{tabular}
    
    $t \gets 0$; \tcp{current time}
    Draw $x \in \mathbb{R}^d$ from $\mathcal{N}(\vect{0},d^2)$; \tcp{initial guess} 
    \While {$t < T$}{
        $s_{\rm cur} \gets s(t)$; \tcp{current effective learning rate}
        \If {$s_{\rm cur} < s_{\rm max}$}{
            $x \gets x - s_{\rm cur} (\nabla F_U(x) + \xi)$ with $\xi \sim \mathcal{N}(\vect{0},1)$ \tcp{descent}
            $t \gets t + s_{\rm cur}$
        }\Else{
            $x \gets x - s_{\rm max} (\nabla F_U(x) + \xi)$ with $\xi \sim \mathcal{N}(\vect{0}, s_{\rm cur}/s_{\rm max})$ \tcp{noise scaled}
            $t \gets t + s_{\rm max}$
        }
        update minimal $F_U(x)$ observed, and \\
        \textbf{break} if it is not updated for consecutive $k$ rounds\;
    }
    \textbf{output} minimal $F_U(x)$ observed with corresponding $x$\;
\end{algorithm}
}

In \algo{sgd}, we give the specific version of SGD we used in our experiment. It is worth noting that there are two major differences between \algo{sgd} and vanilla SGD: (1) the \emph{effective} learning rate $s_{\rm cur}$ may not be the actual learning rate used if it exceeds $s_{\rm max}$, and (2) the learning rate schedule is a function of an abstract time $t$ rather than simply a round number.
The reason we made these modifications is that by doing so, our SGD will correspond to the following SDE (stochastic differential equation)~\cite{shi2020learning}:
\begin{align}
    \d x_t  = - \nabla f(x_t) \d t + \sqrt{s(t)} \d W_t, \label{eqn:sgdsde}
\end{align}
where $W_t$ is a standard Brownian motion.
A recent work by Liu, Su, and Li~\cite{liu2023quantum} generalizes this idea and articulates the connection between stochastic differential equations and the Schr\"{o}dinger Equation. They find that the role of the learning rate $s$ in \eqn{sgdsde} is comparable with the \emph{quantum learning rate} $h$ in the Schr\"odinger equation,
\begin{align} \label{eqn:qhd_trans}
    i \frac{\d}{\d t} \Psi = \left(h^2 \left(-\frac12 \nabla^2 \right) + f(x) \right) \Psi.
\end{align}
If we allow the parameter $h$ to be time-dependent, \eqn{qhd_trans} turns out to be nothing but QHD in disguise: 
let $t^\star := 1/2\lambda(t)$ in \eqn{qhd_params} from \algo{qhd}.
\eqn{qhd_params} becomes
\begin{align}
    i \epsilon \frac{\d}{\d t^\star}\ket{\Psi^\epsilon(t^\star)} = H(t^\star) \ket{\Psi^\epsilon(t^\star)}, \quad \text{where }  H(t^\star) =  \frac{1}{(2{t^\star})^2} \left(-\frac12 \nabla^2 \right) + f(x).
\end{align}
And the range of $t^\star$ is $[1/2\lambda_f,\lambda_f/2]$.
The quantum learning rate is then $s = 1/2t^{\star}$.
Translating back to SGD we get the following learning rate schedule $s_{\lambda_f} \colon [0,\lambda_f/2 - 1/2\lambda_f] \to \mathbb{R}_+$,
\begin{align*}
   s_{\lambda_f}(t) = \frac{1}{2 \left( t + 1/2\lambda_f \right)}.
\end{align*}

We run the SGD experiments using the following parameters.
\begin{align*}
    \begin{array}{c|ccccc}
        & \text{Range of }d & \text{Choices of }s & s_{\rm max} & k & \text{Number of Runs}\footref{ftn:nor} \\
        \hline
        \text{SGD} & \{1,2,\dots,15\} & \{2^{-5},2^{-4},\dots,2^{3}\} & 0.01 & 1000 & 10000 \\
        \text{SGD} & \{1,2,\dots,15\} & s_{\lambda_f} \text{ with } \lambda_f\in \{2^1,2^2,\dots,2^{9}\} & 0.01 & \infty  & 10000
    \end{array}
\end{align*}

\fig{sgd} shows the scaling of SGD for two types of learning rate schedules. Here, we regard the learning rate $s$ as a major adjustable parameter that affects the performance of the algorithm. 
In each subfigure, a curve represents the TTS scaling for a learning rate $s$. Again, the true TTS scaling of SGD should be the lower envelope of all curves parameterized by various $s$. Clearly, the TTS of SGD scales exponentially in dimensionality $d$ in both settings.

\begin{figure}[htbp]
\centering
\begin{subfigure}{.5\textwidth}
  \centering
  \includegraphics[width=\textwidth]{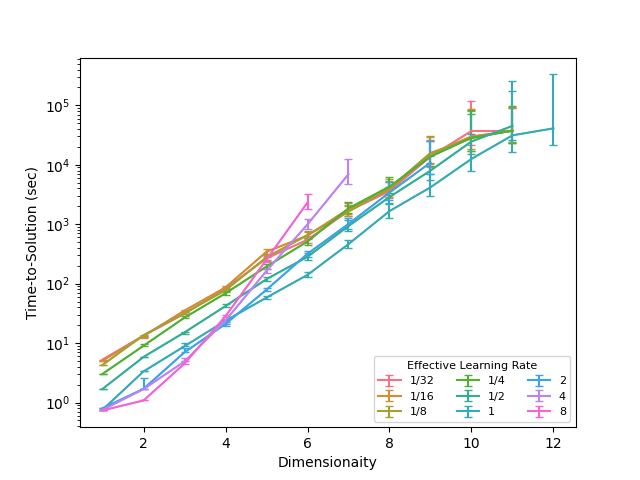}
  \caption{Fixed learning rate}
\end{subfigure}%
\begin{subfigure}{.5\textwidth}
  \centering
  \includegraphics[width=\textwidth]{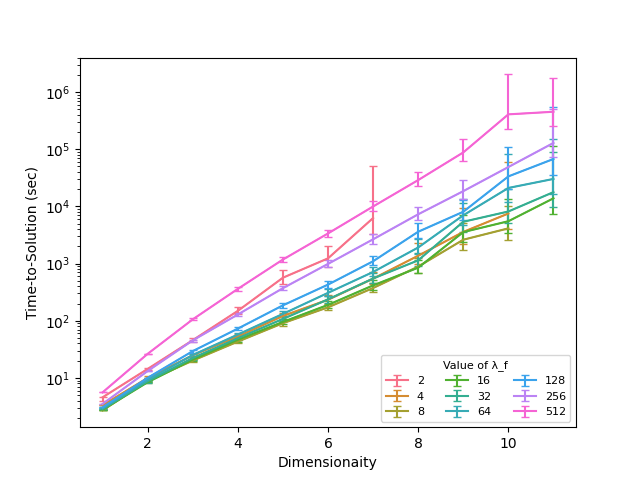}
  \caption{QHD-type learning rate}
\end{subfigure}
\caption{\small TTS scaling of SGD under two types of learning rate schedules respectively. Confidence intervals are calculated at the 95\% level. It is clear that TTS in both settings (lower envelopes of curves) scale exponentially in dimensionality $d$. In some curves, the data points are missing because we do not report the TTS if less than 5 success runs are detected.}
\label{fig:sgd}
\end{figure}

\subsection{Gurobi}

We also conduct an experiment to test the performance of Gurobi~\cite{gurobi}, an industrial-level optimization software based on the branch-and-bound algorithm.
Note that our optimization instances are degree-4 polynomials, while Gurobi only accepts quadratic objectives. The standard resolution is to introduce auxiliary variables to represent higher-degree polynomial terms. For example, the following quartic programming problem,
\begin{align}
    \text{minimize} & \quad x_1x_2x_3x_4, \\
    \text{where} & \quad \vect{x} \in \R^4,
\end{align}
is equivalent to a Quadratically Constrained Quadratic Program (QCQP),
\begin{align}
    \text{minimize} & \quad X_{12}X_{34}, \\
    \text{where} & \quad X_{12} = x_1 x_2, X_{34} = x_3 x_4.
\end{align}
Similarly, to reformulate the degree-4 nonconvex optimization instance as a QCQP, we need to introduce $\Theta(d^2)$ auxiliary variables $X_{ij}$ for $i,j \in \{1,\dots,d\}$ and we add corresponding constraints $X_{ij} = x_i x_j$.

\begin{figure}[htbp]
\centering
\includegraphics[width=.5\textwidth]{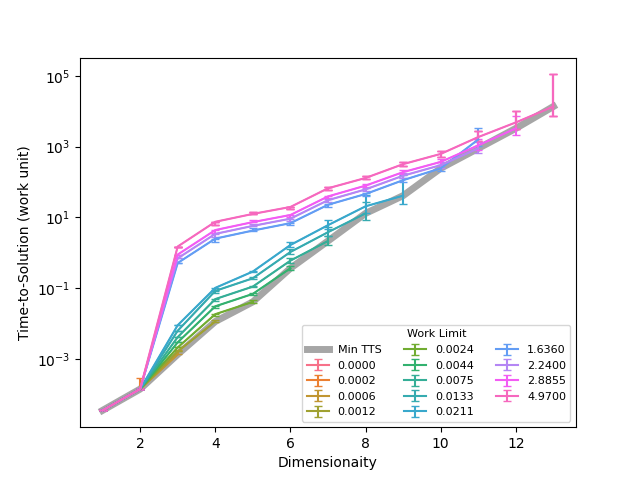}
\caption{\small TTS scaling of Gurobi which is roughly exponential in dimensionality $d$.
Confidence intervals are calculated at the 95\% level. 1 work unit is proportional to 1 second of CPU time. In some curves, the data points are missing because we do not report the TTS if less than 5 success runs are detected.}
\label{fig:gb}
\end{figure}

In our experiment, the adjustable parameter $\theta$ of Gurobi is the \emph{work limit}, which limits the total \emph{work} expended in a run.
The \emph{work} metric, introduced by Gurobi, is proportional to the CPU time spent. However, the pre-factor significantly depends on the hardware and the model (problem instance type) that is being solved. In our setting, a rough estimate suggests that 1 work unit corresponds to $10$--$100$ seconds of CPU time.

Besides the \emph{work limit}, we also set a maximal timeout of 30 minutes for Gurobi. If Gurobi does not solve a problem instance within 30 minutes, we regard it fails for this instance and use the current work limit as the runtime.
Below, we list our experiment parameters.
\begin{align*}
    \begin{array}{c|ccc}
        & \text{Range of }d & \text{Time Limit} & \text{Number of Runs}\footref{ftn:nor} \\
        \hline
        \text{Gurobi} & \{1,2,\dots,15\} & 1800\text{ sec}  & 3000 
    \end{array}
\end{align*}

\fig{gb} shows the TTS scaling of Gurobi. Each TTS curve is measured for a Gurobi program with a fixed $\theta$. In total, 10 TTS curves are depicted. Each curve uses a work limit parameter $\theta$ that yields the optimal performance in at least one dimension $d\in \{1, \dots, 13\}$. Some data points are missing in the plot as we do not report the TTS with an extremely small success probability (i.e., number of successful events less than $5$). The lower envelope of these TTS curves (see the gray curve) gives an estimated scaling of the runtime of Gurobi. The TTS of Gurobi essentially scales exponentially in dimensionality $d$.

\newpage
\bibliographystyle{myhamsplain}
\bibliography{ref}

\appendix
\noindent{\huge \textbf{Appendices}}

\section{Quantum simulation with analytic initial data}\label{append:simulation}
In this section, all notations are the same as in \sec{q_simulation}.

\begin{lemma}
    Let $\Psi(x,t)$ denote the exact solution of \eqn{schrodinger} and $\widetilde{\Psi}(x,t)$ denote the approximated solution by the Fourier spectral method (truncated up to frequency $n$).\footnote{More details on the Fourier spectral method can be found in Section 2.2 (in particular Lemma 1) in \cite{childs2022quantum}.}
    We assume that the initial data $\Psi_0(x)$ is periodic and analytic in $x\in \Omega$. Then, for any integer $n \ge 1$, the error from the Fourier spectral method satisfies
    \begin{align}
        \max_{x, t} |\Psi(x,t) - \widetilde{\Psi}(x,t)| \le 2r^{n/2+1},
    \end{align}
    where $0 < r < \min(\frac{1}{2},\frac{1}{At})$, $A$ is an absolute constant that only depends on $\Psi_0(x)$.
\end{lemma}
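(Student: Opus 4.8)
The plan is to establish this as an exponential (spectral) convergence estimate, driven by the geometric decay of the Fourier coefficients of an analytic periodic function. First I would record the elementary fact that, since $\Psi_0$ is periodic and analytic on $\Omega$, it extends holomorphically to a complex strip $\{|\mathrm{Im}\, z| < \sigma_0\}$ for some $\sigma_0 > 0$; shifting the contour of the Fourier integral into this strip yields the geometric decay $|\hat\Psi_0(k)| \le M\, r_0^{|k|}$ with $r_0 = e^{-\sigma_0} < 1$. More generally, for any $r \in (r_0, 1)$ one has $|\hat\Psi_0(k)| \le M(r)\, r^{|k|}$, and this freedom in $r$ is what ultimately appears in the statement.

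Next I would propagate this decay through the Schr\"odinger evolution \eqn{schrodinger}. The key observation is that the free kinetic propagator $e^{\i t \nabla^2/2}$ acts on each Fourier mode $\hat\Psi(k)$ by the pure phase $e^{-\i|k|^2 t/2}$ and hence leaves every coefficient magnitude $|\hat\Psi(k,t)|$ invariant; by itself it would preserve the analyticity strip for all time. The role of the potential term is therefore isolated: working in the analytic norm $\|u\|_\sigma := \sum_k |\hat u(k)|\, e^{\sigma|k|}$, which is a Banach algebra under multiplication ($\|uv\|_\sigma \le \|u\|_\sigma\|v\|_\sigma$), a Duhamel expansion together with a Gronwall-type estimate shows that the effective strip half-width degrades at most linearly in time. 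Concretely, the geometric bound $|\hat\Psi(k,t)| \le M'\, r^{|k|}$ persists uniformly for $t$ in the evolution interval precisely under the threshold $r < 1/(At)$, which is where that constraint in the statement originates, with $A$ governed by the regularity of the data (and, in the QHD application, by the polynomial potential, which is entire).

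Finally I would convert the coefficient decay into the claimed sup-norm bound. Since the Fourier-spectral solution $\widetilde\Psi$ is exactly $\Psi$ with frequencies above $n$ discarded, the error $\Psi - \widetilde\Psi$ is controlled in $L^\infty$ by the $\ell^1$ tail of the high-frequency coefficients, $\max_{x,t}|\Psi - \widetilde\Psi| \le \sum_{|k|>n}|\hat\Psi(k,t)|$. Bounding this by the geometric series $M'\sum_{|k|>n} r^{|k|}$ and using $r < 1/2$ to sum it collapses the estimate to $2\, r^{n/2+1}$ after normalizing $M'$ into the radius and accounting for the mode count. Uniformity in $x$ is automatic from the absolutely convergent Fourier series, and uniformity in $t$ follows from the uniform-in-$t$ radius established in the previous step.

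The main obstacle is the middle step: quantifying how the potential couples Fourier modes and erodes the analyticity strip. For the free equation the decay rate is conserved exactly, so all the difficulty is in showing that the potential can shrink the strip no faster than linearly in $t$ --- this is exactly the content of the $1/(At)$ threshold, and it is the place where analyticity (rather than mere smoothness) of the initial data is indispensable, since only then does the analytic-norm Duhamel iteration close. I would expect the $n/2$ (rather than $n$) in the exponent to emerge from this bookkeeping, reflecting the loss incurred in trading the true decay radius for the looser admissible radius $r$.
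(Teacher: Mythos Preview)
Your three-step outline (analyticity $\Rightarrow$ geometric coefficient decay; propagate the decay through the evolution; sum the high-frequency tail) is correct and matches the paper's structure. The substantive difference is in the middle step. The paper does not run a Duhamel/Gronwall argument in an analytic Banach algebra; instead it invokes Bourgain's Sobolev-norm growth bound \cite[Proposition~1]{bourgain1999growth}, which gives $\|\Psi^{(k)}(\cdot,t)\|\le At\,\|\Psi_0^{(k)}\|$ directly, and then converts these derivative bounds into coefficient decay via Cauchy's integral formula on the shrunken strip $\Gamma_t$. Your route is more self-contained but, as written, a plain Gronwall in the norm $\|u\|_\sigma=\sum_k|\hat u(k)|e^{\sigma|k|}$ yields $\|\Psi(t)\|_\sigma\le\|\Psi_0\|_\sigma\exp(\int_0^t\|V\|_\sigma)$, i.e.\ exponential-in-$t$ growth at a \emph{fixed} strip width rather than the linear strip degradation you describe; this would produce a constraint of the shape $r\lesssim e^{-Bt}$ rather than the paper's $r<1/(At)$. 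Matching the linear threshold would require either citing Bourgain (as the paper does) or a Cauchy--Kovalevskaya-type argument over a scale of analytic norms. For the downstream application this distinction is harmless, since either constraint still gives $n=\bigO(\log(1/\eta))$.

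One minor correction: your explanation for the $n/2$ in the exponent is not right. It is not a loss incurred when trading radii; it is purely the Fourier-spectral convention that ``truncated up to frequency $n$'' means retaining $n$ modes, i.e.\ $|k|\le n/2$, so the tail sum is $\sum_{|k|>n/2}|c_k(t)|\le r^{n/2+1}/(1-r)\le 2r^{n/2+1}$ once $r<1/2$. Your line $\sum_{|k|>n}|\hat\Psi(k,t)|$ should read $\sum_{|k|>n/2}$.
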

\begin{proof}
    We give the proof in one dimension, as the same argument is readily generalized to arbitrary finite dimensions. We assume the initial data $\Psi_0(x)$ is periodic over $[0, 2\pi]$. The analyticity implies that, for any $x \in [0, 2\pi]$, there is a constant $C$ such that
    \begin{align}
        \left|\Psi^{(k)}_0(x)\right| \le C^{k+1}(k!).
    \end{align}
    Therefore, the function $\Psi_0(x)$ admits an analytic continuation in the strip 
    $$\Gamma_0 = \left\{z\in \C: |z - x| < \frac{1}{C}, x \in [0, 2\pi]\right\}.$$
    Due to \cite[Proposition 1]{bourgain1999growth}, there is an absolute constant $A$ such that for any $t \in [0,T]$, 
    \begin{align}\label{eqn:sobolev}
        \left\|\Psi^{(k)}(\cdot,t)\right\| \le A t \left\|\Psi^{(k)}_0\right\| \le A C^{k+1}t (k!),
    \end{align}
    which implies that the wave function $\Psi(x,t)$ is periodic and analytic for any finite $t$. The strip on which $\Psi(x,t)$ admits an analytic continuation is
    $$\Gamma_t = \left\{z\in \C: |z - x| < \frac{1}{ACt}, x \in [0, 2\pi]\right\}.$$
    
    Suppose that the function $\Psi(x,t)$ allows an exact, infinite trigonometric polynomial representation (see \cite[Lemma 16]{childs2022quantum}),
    \begin{align}
        \Psi(x,t) = \sum^\infty_{k=0}c_k(t) e^{ikx}.
    \end{align}
    Let $\Tilde{\Psi}(t,x)$ be the truncated Fourier series up to $k = n/2$, then the error from the Fourier spectral method satisfies
    \begin{align}
        |\Psi(x,t) - \Tilde{\Psi}(x,t)| \le \sum^\infty_{k=n/2+1}|c_k(t)|.
    \end{align}
    Meanwhile, the function $\Psi(x,t)$ has an analytic continuation defined by $\Psi(x,t) = \sum^\infty_{k=0} c_k(t) z^k$
    for $z \in \Gamma_t$. By Cauchy's integral formula, for any simply connected curve $\gamma$ on the strip $\Gamma_t$, we have that
    \begin{align}
        c_k(t) = \frac{1}{k!}\int_\gamma \Psi^{(k)}(z,t) \d z. 
    \end{align}
    Together with \eqn{sobolev}, it turns out that $|c_k(t)| \le r^{k}$ for some $0 < r < \frac{1}{At}$. Therefore, the error from the Fourier spectral method satisfies
    \begin{align}
        |\Psi(x,t) - \Tilde{\Psi}(x,t)| \le \sum^\infty_{k=n/2+1}|c_k(t)| \le \frac{r^{n/2+1}}{1-r}.
    \end{align}
    Moreover, if we force $r < 1/2$, the error is bounded by $2r^{n/2+1}$.
\end{proof}

To ensure that the simulation error is bounded by $\eta$, we may choose the truncation number 
$$n = \left\lceil 2\left(\frac{\log(4/\eta)}{\log(1/r)} - 1\right) \right\rceil \le \bigO\left(\log(1/\eta)\right).$$
Next, by plugging this truncation number in Equation (113) in \cite{childs2022quantum}, we prove \thm{spectral-method}.

\section{Sub-Gaussian ground states}
\subsection{Probability toolbox}
\begin{lemma}\label{lem:subgaussian_convert}
    Let $X\in \R$ be a random variable such that $\mathbb{E}[e^{X^2/K}] \le 2$, where $K>0$ is a real number. 
    Then, for any $s\in \R$, we have
    \begin{align}
        \mathbb{E}[e^{sX}] \le e^{2s^2K}.
    \end{align}
\end{lemma}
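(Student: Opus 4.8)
The plan is to pass from the quadratic-exponential (Orlicz) control $\mathbb{E}[e^{X^2/K}]\le 2$ to the moment-generating-function bound by first extracting polynomial moment estimates and then controlling $\mathbb{E}[e^{sX}]$ separately for large and small values of $s$. First I would expand the hypothesis: since $e^{X^2/K}=\sum_{k\ge 0}\frac{X^{2k}}{K^k k!}$ has non-negative terms, Tonelli gives $\sum_{k\ge 0}\frac{\mathbb{E}[X^{2k}]}{K^k k!}\le 2$, so each summand is at most $2$, i.e. $\mathbb{E}[X^{2k}]\le 2\,k!\,K^k$ for every $k$; in particular $\mathbb{E}[X^2]\le 2K$. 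These are the only facts about $X$ I will use.

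For large $s$ — concretely whenever $s^2K\ge \tfrac{4\ln 2}{7}$ — I would avoid the series entirely and use the pointwise Young-type inequality $sX\le \frac{X^2}{K}+\frac{s^2K}{4}$, which is nothing but $0\le(\tfrac{X}{\sqrt K}-\tfrac{s\sqrt K}{2})^2$. Taking expectations and invoking the hypothesis gives $\mathbb{E}[e^{sX}]\le e^{s^2K/4}\,\mathbb{E}[e^{X^2/K}]\le 2e^{s^2K/4}$, and the chosen threshold on $s^2K$ is exactly what makes $2e^{s^2K/4}\le e^{2s^2K}$ (equivalently $\ln 2\le \tfrac{7}{4}s^2K$).

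For the complementary range of small $s$ I would instead expand $\mathbb{E}[e^{sX}]=1+\sum_{k\ge1}\frac{s^k\mathbb{E}[X^k]}{k!}$. Here the vanishing of the linear term is essential: if $\mathbb{E}[X]\neq 0$ the expansion grows like $1+s\,\mathbb{E}[X]$ near the origin and cannot be dominated by $e^{2s^2K}=1+2s^2K+O(s^4)$, so one must apply the lemma to a centered variable (as happens in the application, where it is invoked on the recentered position). Plugging the even-moment bound into the even terms and using $(2k)!\ge (k!)^2$ produces a quadratic coefficient $\tfrac12\mathbb{E}[X^2]\le K$, strictly below the target coefficient $2K$; this leaves an $s^2$-order surplus. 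The odd moments, bounded by Cauchy–Schwarz through $|\mathbb{E}[X^{2k+1}]|\le(\mathbb{E}[X^{2k}]\mathbb{E}[X^{2k+2}])^{1/2}$, enter only at order $s^3$ and higher, so for $s$ below the threshold they are swallowed by the quadratic surplus, and the factorial series sums to something below $e^{2s^2K}$.

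The main obstacle is the bookkeeping in this small-$s$ regime: one must check that the odd-moment contributions are genuinely dominated by the quadratic surplus \emph{uniformly} up to the crossover point $s^2K=\tfrac{4\ln 2}{7}$ where the Young argument takes over, and that the several loose constants introduced along the way still close at precisely the exponent $2s^2K$ rather than some larger multiple. Making the two regimes meet cleanly, and confirming the centering hypothesis that the statement implicitly needs, are the only delicate points; the moment extraction and the large-$s$ estimate are routine.
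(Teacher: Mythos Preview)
You have correctly spotted that the lemma is false as stated: take $X$ equal to the constant $c=\sqrt{K\ln 2}$, so that $\mathbb{E}[e^{X^2/K}]=2$, yet $\mathbb{E}[e^{sX}]=e^{sc}>e^{2s^2K}$ whenever $0<s<c/(2K)$. The paper's proof proceeds differently from yours---it first extracts a tail bound $\Pr[|X|>t]\le 2e^{-t^2/K}$ via Markov, integrates to obtain $\mathbb{E}[|X|^k]\le K^{k/2}k\,\Gamma(k/2)$, and then bounds the Taylor series of $e^{sX}$ termwise through a single even/odd split---but the error there is exactly the one you anticipated: when the odd subsum is reindexed it is written as $\sum_{k\ge 1}\frac{(s^2K)^{k+1/2}(2k+1)\Gamma(k+1/2)}{(2k+1)!}$, silently discarding the original $k=1$ term $\sqrt{\pi\,s^2K}$, i.e.\ the linear-in-$s$ contribution.

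Your large-$s$ Young-inequality argument is cleaner than anything in the paper and needs no centering. Your small-$s$ sketch is on the right track once $\mathbb{E}[X]=0$ is assumed, though as you acknowledge the bookkeeping to make the two regimes meet at the constant $2$ is not carried out. One caution about the intended application: in the sub-Gaussian ground-state lemma the variable is $X-x^*$ with $x^*$ the \emph{minimizer of the potential}, not the mean of $|\phi_\lambda|^2$, so simply adding a zero-mean hypothesis here does not by itself repair the downstream use; one would need to center at the actual mean and absorb the shift $|\mathbb{E}[X]-x^*|=O(\sqrt{K/\lambda})$ into the constants to make the chain of lemmas close.
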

\begin{proof}
    This is a standard result of sub-Gaussian random variables. Here, we give the proof for the completeness of this paper.
    By Markov's inequality, we have
    \begin{align*}
        \Pr[|X| > t] = \Pr\left[e^{X^2/K} > e^{t^2/K}\right] \le \frac{\mathbb{E}[e^{X^2/K}]}{e^{t^2/K}} \le 2 e^{-t^2/K}.
    \end{align*}
    With the above inequality, we can estimate the $k$-th moment of $X$. For any positive integer $k \ge 1$,
    \begin{align*}
        \mathbb{E}[|X|^k] &= \int^\infty_0 \Pr[|X|^k > t]~\d t = \int^\infty_0 \Pr[|X| > t^{1/k}]~\d t\\
        &\le 2 \int^\infty_0 e^{-t^{2/k}/K}~\d t = K^{k/2}k\Gamma(k/2),
    \end{align*}
    where $\Gamma(z)\coloneqq \int^\infty_{0}t^{z-1}e^{-t}~\d t$ is the Gamma function. 
    
    For any $s \in \R$, we use the Taylor expansion of the exponential function $e^{sX}$ and apply the dominated convergence theorem,
    \begin{align*}
        \mathbb{E}[e^{sX}] &\le 1 + \sum^\infty_{k=1} \frac{s^k \mathbb{E}[|X|^k]}{k!}
        \le 1 + \sum_k \frac{(s^2K)^{k/2}k\Gamma(k/2)}{k!}\\
        &= 1 + \sum^\infty_{k=1}\frac{(s^2K)^k 2k \Gamma(k)}{(2k)!} + \sum^\infty_{k=1}\frac{(s^2K)^{k+1/2} (2k+1) \Gamma(k+1/2)}{(2k+1)!}\\
        &\le 1 + \left(2 + \sqrt{s^2K}\right)\sum^\infty_{k=1}\frac{(s^2K)^k k!}{(2k)!}
        \le 1 + \left(1 + \sqrt{\frac{s^2K}{4}}\right)\sum^\infty_{k=1}\frac{(s^2K)^k }{k!} \le e^{2s^2K}.
    \end{align*}
    Note that we use the inequality $2(k!)^2 \le (2k)!$ in the second-to-last step.
\end{proof}

\begin{lemma}\label{lem:subgaussian_rescale}
    Let $X\in \R$ be a random variable such that $\mathbb{E}[e^{X^2/K}] \le \alpha$ with $\alpha > 1$. Then, we have
    \begin{align}
        \mathbb{E}[e^{X^2/(\alpha-1)K}] \le 2.
    \end{align}
\end{lemma}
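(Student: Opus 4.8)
The plan is to obtain the bound from a single application of the convexity of the exponential function. Writing $\beta \coloneqq \frac{1}{\alpha-1}$, I first observe that the exponent appearing in the target expectation is merely a rescaling of the exponent in the hypothesis, namely $\frac{X^2}{(\alpha-1)K} = \beta\,\frac{X^2}{K}$, and that whenever $\beta\in[0,1]$ this rescaling is a genuine convex combination of $0$ and $\frac{X^2}{K}$:
\begin{align*}
    \beta\,\frac{X^2}{K} = (1-\beta)\cdot 0 + \beta\cdot \frac{X^2}{K}.
\end{align*}

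Next I would apply Jensen's inequality (convexity of $t\mapsto e^t$) pointwise to this convex combination, which yields
\begin{align*}
    e^{X^2/((\alpha-1)K)} \le (1-\beta)\,e^{0} + \beta\,e^{X^2/K} = (1-\beta) + \beta\,e^{X^2/K}.
\end{align*}
Taking expectations and invoking the hypothesis $\mathbb{E}\!\left[e^{X^2/K}\right]\le \alpha$ then gives
\begin{align*}
    \mathbb{E}\!\left[e^{X^2/((\alpha-1)K)}\right] \le (1-\beta) + \beta\,\mathbb{E}\!\left[e^{X^2/K}\right] \le (1-\beta) + \beta\alpha = 1 + \beta(\alpha-1).
\end{align*}
Finally I would substitute $\beta=\frac{1}{\alpha-1}$, so that $\beta(\alpha-1)=1$ and the right-hand side collapses to exactly $2$, which is the desired estimate. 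No moment expansions or integral estimates are needed; the whole content of the statement is this one convexity step, which is why the companion conversion \lem{subgaussian_convert} can then be fed the clean hypothesis $\mathbb{E}[e^{Y^2/K'}]\le 2$.

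The one point that requires care — and the main (mild) obstacle — is the legitimacy of the convex-combination step, which demands $\beta = \frac{1}{\alpha-1}\in[0,1]$, i.e.\ $\alpha\ge 2$. In the regime $\alpha\ge 2$ the argument above is immediate and in fact tight. For $1<\alpha<2$ the coefficient $\beta$ exceeds $1$, so the combination is no longer convex and Jensen does not apply in this direction; since one cannot inflate the exponent for free, this borderline range would need separate handling (or the hypothesis sharpened to $\alpha\ge 2$, which is precisely the regime arising from the crude moment bound used downstream). I do not anticipate any genuine difficulty beyond this bookkeeping, as the estimate is otherwise a one-line consequence of convexity.
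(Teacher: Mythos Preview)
Your convexity argument is correct and is in fact a cleaner repackaging of the paper's own proof. The paper expands $e^{X^2/(rK)}$ in a Taylor series and bounds $(1/r)^k \le 1/r$ term by term for $r\ge 1$; summing back up, this is exactly your pointwise inequality $e^{\beta Y}\le (1-\beta)+\beta e^{Y}$ with $\beta=1/r\in(0,1]$, after which both proofs set $r=\alpha-1$ (equivalently $\beta=1/(\alpha-1)$).

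Your caveat about the range $1<\alpha<2$ is not mere bookkeeping to be dispatched later: the paper's proof carries the identical hidden restriction (it needs $r=\alpha-1\ge 1$), and the statement as written is actually \emph{false} in that range. For a concrete counterexample, let $Y\coloneqq X^2/K$ be exponentially distributed with rate $\lambda>2$; then $\mathbb{E}[e^{Y}]=\lambda/(\lambda-1)\eqqcolon\alpha\in(1,2)$, while $1/(\alpha-1)=\lambda-1$ and hence $\mathbb{E}\bigl[e^{Y/(\alpha-1)}\bigr]=\lambda/(\lambda-(\lambda-1))=\lambda>2$. So no ``separate handling'' can rescue that regime; the correct fix is simply to assume $\alpha\ge 2$, which is harmless downstream since the $\alpha$ in the application comes from a crude Agmon-type bound and may be taken as large as one likes.
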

\begin{proof}
    Let $r > 1$ be a real-valued parameter. Using the Fubini theorem,
    \begin{align*}
        \mathbb{E}[e^{X^2/rK}] &= 1 + \sum^\infty_{k=1} \frac{\mathbb{E}[(X^2/rK)^k]}{k!}
        \le 1 + \frac{1}{r}\sum^\infty_{k=1} \frac{\mathbb{E}[(X^2/K)^k]}{k!}
        \le 1 + \frac{\alpha - 1}{r}. 
    \end{align*}
    Therefore, if we choose $r = \alpha - 1$, we end up with $\mathbb{E}[e^{X^2/(\alpha-1)K}] \le 2$.
\end{proof}

\begin{theorem}[Theorem 1, \cite{hsu2012tail}]\label{thm:subgaussian_tail}
    Let $A \in \R^{m\times n}$ be a matrix, and let $\Sigma = A^T A$. Suppose that $x = (x_1,\dots,x_n)$ is a random vector such that, for some $\sigma \ge 0$,
    \begin{align}
        \mathbb{E}\left[e^{r^Tx}\right] \le e^{\|r\|^2\sigma^2/2}
    \end{align}
    for all $r \in \R^d$. Then, for all $c > 0$, 
    \begin{align}
        \Pr\left[\|Ax\|^2 > \sigma^2\left(\Tr(\Sigma) + 2\sqrt{\Tr(\Sigma^2)c} + 2\|\Sigma\|c\right)\right] \le e^{-c}.
    \end{align}
\end{theorem}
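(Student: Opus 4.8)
The statement is a Hanson--Wright type concentration bound for the quadratic form $\|Ax\|^2 = x^\top \Sigma x$ of a sub-Gaussian vector, and I would establish it by the exponential moment (Chernoff) method after linearizing the quadratic form. For any $\lambda > 0$, Markov's inequality gives $\Pr[\|Ax\|^2 > t] \le e^{-\lambda t}\,\mathbb{E}[e^{\lambda \|Ax\|^2}]$, so everything reduces to controlling the moment generating function $\mathbb{E}[e^{\lambda \|Ax\|^2}]$ and then optimizing over $\lambda$ against the target value $t$.

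The essential difficulty is that the hypothesis only controls the \emph{linear} exponential moment $\mathbb{E}[e^{r^\top x}]$, whereas the quantity of interest is quadratic. I would remove the quadratic dependence with the Gaussian integral identity $e^{\|v\|^2/2} = \mathbb{E}_{g\sim\mathcal{N}(0,I_m)}[e^{g^\top v}]$, applied to $v = \sqrt{2\lambda}\,Ax$. This yields $e^{\lambda\|Ax\|^2} = \mathbb{E}_g[e^{\sqrt{2\lambda}\,(A^\top g)^\top x}]$; taking $\mathbb{E}_x$, exchanging the order of integration by Fubini, and invoking the sub-Gaussian hypothesis with $r = \sqrt{2\lambda}\,A^\top g$ gives $\mathbb{E}_x[e^{\lambda\|Ax\|^2}] \le \mathbb{E}_g[e^{\lambda\sigma^2\, g^\top A A^\top g}]$, which is now a genuine Gaussian quadratic form.

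Next I would evaluate this Gaussian moment generating function exactly. Diagonalizing $A A^\top$, whose nonzero eigenvalues coincide with those $\lambda_1,\dots,\lambda_n \ge 0$ of $\Sigma = A^\top A$, factorizes the expectation into one-dimensional Gaussian integrals and gives $\mathbb{E}_g[e^{\lambda\sigma^2 g^\top AA^\top g}] = \prod_i (1 - 2\lambda\sigma^2\lambda_i)^{-1/2}$, valid precisely for $\lambda < 1/(2\sigma^2\|\Sigma\|)$. Substituting back, the Chernoff exponent becomes $-\lambda t - \tfrac12\sum_i \log(1 - 2\lambda\sigma^2\lambda_i)$. Using the elementary bound $-\log(1-u) \le u + \tfrac{u^2}{2(1-u)}$ on $[0,1)$ and bounding each factor $1/(1-2\lambda\sigma^2\lambda_i)$ by $1/(1-2\lambda\sigma^2\|\Sigma\|)$, I would reduce the exponent to $-\lambda t + \lambda\sigma^2\Tr(\Sigma) + \frac{\lambda^2\sigma^4\Tr(\Sigma^2)}{1-2\lambda\sigma^2\|\Sigma\|}$, recognizing $\sum_i\lambda_i = \Tr(\Sigma)$ and $\sum_i\lambda_i^2 = \Tr(\Sigma^2)$.

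It remains to plug in the target $t = \sigma^2(\Tr(\Sigma) + 2\sqrt{\Tr(\Sigma^2)c} + 2\|\Sigma\|c)$, which cancels the mean term $\lambda\sigma^2\Tr(\Sigma)$ and leaves $-\lambda s + \frac{\lambda^2\sigma^4\Tr(\Sigma^2)}{1-2\lambda\sigma^2\|\Sigma\|}$ with $s = 2\sigma^2\sqrt{\Tr(\Sigma^2)c} + 2\sigma^2\|\Sigma\|c$, and to exhibit an admissible $\lambda$ driving this below $-c$. I expect this final optimization to be the main obstacle, since the two summands of $s$ correspond to two distinct operating regimes that a single $\lambda$ must handle at once. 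When the variance $\sigma^4\Tr(\Sigma^2)$ dominates, the interior optimizer $\lambda^\star \approx s/(2\sigma^4\Tr(\Sigma^2))$ is feasible and the term $2\sqrt{\Tr(\Sigma^2)c}$ alone already produces exponent $\le -c$ (with the denominator negligibly close to $1$); when the spectral norm $\|\Sigma\|$ dominates, $\lambda$ must be pushed toward the pole at $1/(2\sigma^2\|\Sigma\|)$, and the extra $2\|\Sigma\|c$ term is exactly what is needed to absorb the resulting blow-up of the denominator. Verifying that one choice of $\lambda$ (e.g. setting $2\lambda\sigma^2\|\Sigma\|$ to a fraction of $1$ determined by the ratio of the two terms in $s$) yields $e^{-c}$ uniformly across both regimes is where the constants must be tracked with care.
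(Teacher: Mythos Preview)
The paper does not prove this statement: it is quoted as Theorem~1 of Hsu--Kakade--Zhang and invoked as a black box in the proof of \thm{main}, so there is no in-paper argument to compare against.

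That said, your outline is correct and is essentially the proof given in the cited reference. Your one point of hesitation, the final optimization in $\lambda$, is actually cleaner than the two-regime discussion suggests. Writing $T_2 = \Tr(\Sigma^2)$ and $N = \|\Sigma\|$, the single explicit choice
\[
\lambda \;=\; \frac{1}{\sigma^{2}}\cdot\frac{\sqrt{c}}{\sqrt{T_2} + 2N\sqrt{c}}
\]
is admissible (it gives $2\lambda\sigma^{2}N = \tfrac{2N\sqrt{c}}{\sqrt{T_2}+2N\sqrt{c}} < 1$ whenever $\Sigma\neq 0$) and, upon substitution into your bound
\[
-\lambda t + \lambda\sigma^{2}\Tr(\Sigma) + \frac{\lambda^{2}\sigma^{4}T_2}{1-2\lambda\sigma^{2}N}
\]
with $t = \sigma^{2}\bigl(\Tr(\Sigma) + 2\sqrt{T_2 c} + 2Nc\bigr)$, yields exactly $-c$ after a short computation. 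No separate treatment of a ``variance-dominated'' versus ``spectral-norm-dominated'' regime is needed; the single closed-form $\lambda$ handles both simultaneously.
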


\subsection{Proof of \lem{subgaussian}}\label{append:proof_subgaussian}
\begin{proof}
    Note that the ground state of the Hamiltonian $\hat{H}(\lambda)$ is exactly the same as $$2\lambda\hat{H}(\lambda) = - \nabla^2 + 2 \lambda^2 w(x),$$
    where the effective potential function is $\lambda^2 w(x)$.
    By \cite[Theorem 3.4]{hislop2012introduction}, there is a constant $\alpha > 0$ such that
    \begin{align}\label{eqn:decay}
        \int_\R e^{d_0(x)}|\phi_\lambda(x)|^2 ~\d x \le \alpha,
    \end{align}
    where $d_0(x)$ is the Agmon distance between $x$ and $0$ (see \cite[Definition 3.2]{hislop2012introduction}). It is well known that the Agmon distance scales as the square root of the potential function, i.e.,
    \begin{align}
        d_0(x) \sim (\lambda^2 w(x))^{1/2} \text{ as } |x|\to \infty.
    \end{align}
    By \defn{1}, the function $w(x)$ grows as fast as a quartic function, which means there exists a positive constant $c'$ such that $w(x) \ge c'(x-x^*)^4$. It turns out that
    $$d_0(x) \ge c'' \lambda(x - x^*)^2,$$
    where $c''$ is an absolute constant.
    It follows from \eqn{decay} that
    \begin{align}\label{eqn:estimate1}
        \int_\R e^{c''\lambda(x-x^*)^2}|\phi_\lambda(x)|^2 ~\d x \le \alpha.
    \end{align}
    Let $X \sim |\phi_\lambda (x)|^2$ and $\xi \coloneqq X - x^*$, we rewrite \eqn{estimate1} as $\mathbb{E}[e^{c''\lambda \xi^2}] \le \alpha$. Without loss of generality, we assume $\alpha > 1$. Then, \lem{subgaussian_rescale} implies that $\mathbb{E}[e^{c''\lambda \xi^2/(\alpha - 1)}] \le 2$. Now, we invoke \lem{subgaussian_convert} to obtain that
    \begin{align}
        \mathbb{E}[e^{s\xi}] \le e^{s^2\sigma^2/2},
    \end{align}
    where $\sigma^2 = \beta/\lambda$ with $\beta = 4(\alpha-1)/c''$. Note that $\beta$ only depends on our choice of $w$.
\end{proof}

\section{Auxiliary lemmas}\label{append:technical}
\begin{lemma}\label{lem:robust_state}
    Suppose that $\psi_1$, $\psi_2$ are two unit vectors in $\H$ such that
    \begin{align}
        \|\psi_1 - \psi_2\| \le \delta.
    \end{align}
    Let $O$ be a bounded operator on $\H$ such that $\|O\| \le 1$. Then, we have
    \begin{align}
        \left|\bra{\psi_1}O\ket{\psi_1} - \bra{\psi_2}O\ket{\psi_2}\right| \le 2\delta.
    \end{align}
\end{lemma}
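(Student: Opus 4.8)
The plan is to reduce the difference of two quadratic forms to a pair of linear estimates by inserting a cross term. First I would write the difference as
\begin{align*}
    \langle \psi_1, O\psi_1\rangle - \langle \psi_2, O\psi_2\rangle = \langle \psi_1, O(\psi_1 - \psi_2)\rangle + \langle \psi_1 - \psi_2, O\psi_2\rangle,
\end{align*}
which follows immediately by adding and subtracting the cross term $\langle \psi_1, O\psi_2\rangle$. This is the one structural step in the argument: it splits the quadratic-in-$\psi$ expression into two pieces, each of which depends linearly on the small quantity $\psi_1 - \psi_2$.

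Next I would bound each piece separately. Applying the Cauchy--Schwarz inequality together with the operator-norm bound $\|O\| \le 1$ and the fact that $\psi_1, \psi_2$ are unit vectors gives
\begin{align*}
    \left|\langle \psi_1, O(\psi_1 - \psi_2)\rangle\right| \le \|\psi_1\|\,\|O\|\,\|\psi_1 - \psi_2\| \le \delta,
\end{align*}
and symmetrically $\left|\langle \psi_1 - \psi_2, O\psi_2\rangle\right| \le \|\psi_1 - \psi_2\|\,\|O\|\,\|\psi_2\| \le \delta$. Summing the two bounds via the triangle inequality yields the claimed estimate $2\delta$.

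I do not anticipate any genuine obstacle here: the statement is a routine continuity estimate for expectation values under a small perturbation of the state, and the only thing one must get right is the cross-term insertion so that both remaining factors carry the small norm $\|\psi_1 - \psi_2\|$. No properties of $\H$ beyond its inner-product structure, and no properties of $O$ beyond boundedness with $\|O\|\le 1$, are needed, so the argument applies verbatim to the indicator operator $1_{B_\delta}$ used in the proof of \thm{main}.
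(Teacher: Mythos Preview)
Your proposal is correct and matches the paper's approach essentially line for line: the paper also inserts a cross term and bounds each piece by $\delta$ via the triangle inequality and the bound $\|O\|\le 1$. The only cosmetic difference is that the paper adds and subtracts $\bra{\psi_2}O\ket{\psi_1}$ rather than $\bra{\psi_1}O\ket{\psi_2}$, which is an entirely symmetric choice.
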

\begin{proof}
    By the triangle inequality, 
    \begin{align*}
        \left|\bra{\psi_1}O\ket{\psi_1} - \bra{\psi_2}O\ket{\psi_2}\right| \le \left|(\bra{\psi_1} - \bra{\psi_2})O\ket{\psi_1}\right| + \left|\bra{\psi_2}O(\ket{\psi_1} - \ket{\psi_2})\right| \le 2\delta.
    \end{align*}
\end{proof}

\vspace{4mm}
The following is the proof of \lem{time_dilation}.

\begin{proof}
    To simulate the QHD dynamics as given in \eqn{qhd_params}, we consider the change of variable,
\begin{align}
    \xi = \lambda_f + \frac{1}{\lambda_f} - \frac{1}{\lambda(t)} = \lambda_f\left(1 - e^{-2t}\right) + \frac{1}{\lambda_f}.
\end{align}
We see that $\xi(t)$ is an increasing function in $t$ such that $\xi(0) = \frac{1}{\lambda_f}$, $\xi(t_f) = \lambda_f$. Moreover, we can express $t$ in terms of $\xi$:
\begin{align}
    t(\xi) = -\frac{1}{2}\log\left(-\frac{\xi}{\lambda_f} + 1 + \frac{1}{\lambda^2_f}\right).
\end{align}

Using the change of variable $t\mapsto \xi(t)$ and we define $\ket{\Psi^\epsilon(\xi)} = \ket{\Psi^\epsilon(\xi(t))}$, the Schr\"odinger equation \eqn{qhd_params} becomes
\begin{align}
    i \epsilon \frac{\d}{\d \xi}\ket{\Psi^\epsilon(\xi)} = \frac{1}{2}\left[-\frac{1}{2}\nabla^2 + \varphi(\xi) f\right]\ket{\Psi^\epsilon(\xi)},
\end{align}
where 
\begin{align}
    \varphi(\xi) \coloneqq \frac{1}{\left(-\xi + \lambda_f + 1/\lambda_f\right)^2} \in [1/\lambda^2_f, \lambda^2_f].
\end{align}

We can also absorb the parameter $\epsilon$ into the Hamiltonian by dilating the time scale $s \coloneqq \xi/\epsilon \in J \coloneqq [\frac{1}{\epsilon \lambda_f}, \frac{\lambda_f}{\epsilon}]$. Let $\varphi(s) \coloneqq \varphi(\epsilon s)$, we end up with the effective dynamics described by \eqn{effective_dynamics}.
\end{proof}

\end{document}